\newcommand{\Z}{\mathbbm{Z}}
\newcommand{\LL}{\mathcal{L}}
\newcommand{\set}[1]{\{#1\}}
\newcommand{\card}[1]{\lvert #1 \rvert}
\newcommand{\qm}[1]{``#1''}
\newcommand{\ie}{i.e.\ }
\newcommand{\cf}{cf.\ }
\newcommand{\eg}{e.g.\ }
\newcommand{\st}{\mathrm{s.t.}}
\newlength\mysinglespace
\newlength\objspace
\newlength\conspace
\newlength\cconspace
\newenvironment{constr}[1]{\begin{array}[t]{#1}}{\end{array}} %TODO: Abstände zwischen ZF und NBs ggf. mit \hspace{-?pt} korrigieren
\newenvironment{opt*}[3]{\begin{equation*}\begin{array}{rl}#1 & #2 \\[\objspace] \st & \begin{constr}{#3}}{\end{constr}\end{array}\end{equation*}\\[0pt]}
\newenvironment{eqns*}[1]{\begin{equation*}\begin{array}[t]{#1}}{\end{array}\end{equation*}\\[0pt]}
\newtheorem{theorem}{Theorem}[section]
\newtheorem{corollary}[theorem]{Corollary}
\newtheorem{definition}[theorem]{Definition}
\newtheorem{lemma}[theorem]{Lemma}
\newcommand{\TT}{\mathcal{T}}
\newcommand{\mS}{\mathcal{S}}
\newcommand{\graphicpath}{fig}
\DeclarePairedDelimiter{\ceil}{\lceil}{\rceil}
\DeclarePairedDelimiter{\floor}{\lfloor}{\rfloor}
\title{Improving Quantum Computation\\by Optimized Qubit Routing}
\author{
	Friedrich Wagner
	\thanks{Fraunhofer Institute for Integrated Circuits IIS, Erlangen, Germany, \texttt{friedrich.wagner@iis.fraunhofer.de}}
	\and
	Andreas Bärmann
	\thanks{Department of Data Science, University of Erlangen-Nuremberg, Germany, \texttt{andreas.baermann@fau.de}}
	\and
	Frauke Liers
	\thanks{Department of Data Science, University of Erlangen-Nuremberg, Germany, \texttt{frauke.liers@fau.de}}
	\and
	Markus Weissenbäck
	\thanks{Fraunhofer Institute for Integrated Circuits IIS, Erlangen, Germany, \texttt{markus.weissenbaeck@iis.fraunhofer.de}}
}
\begin{document}
	
	\maketitle
	%\tableofcontents
	%\newpage
\begin{abstract}
In this work we propose a high-quality decomposition approach
for qubit routing by swap insertion.
This optimization problem arises in the context
of compiling quantum algorithms
formulated in the circuit model of computation
onto specific quantum hardware.
Our approach decomposes the routing problem
into an allocation subproblem and a set of token swapping problems.
This allows us to tackle the allocation part
and the token swapping part separately.
Extracting the allocation part
from the qubit routing model of Nannicini et al.\ (\cite{nannicini2021}),
we formulate the allocation subproblem as a binary linear program.
Herein, we employ a cost function
that is a lower bound on the overall routing problem objective.
We strengthen the linear relaxation by novel valid inequalities.
For the token swapping part
we develop an exact branch-and-bound algorithm.
In this context, we improve upon known lower bounds
on the token swapping problem.
Furthermore, we enhance an existing approximation algorithm
which runs much faster than the exact approach
and typically is able to determine solutions close to the optimum.
We present numerical results
for the fully integrated allocation and token swapping problem.
Obtained solutions may not be globally optimal
due to the decomposition and the usage of an approximation algorithm.
However, the solutions are obtained fast
and are typically close to optimal.
In addition, there is a significant reduction
in the number of artificial gates and output circuit depth
when compared to various state-of-the-art heuristics. 
Reducing these figures is crucial for minimizing noise
when running quantum algorithms on near-term hardware.
As a consequence, using the novel decomposition approach
leads to compiled algorithms with improved quality. 
Indeed, when compiled with the novel routing procedure
and executed on real hardware, 
our experimental results for
quantum approximate optimization algorithms
show an significant increase in solution quality
in comparison to standard routing methods.\\
\textbf{Keywords:} Combinatorial Optimization, Integer Programming, Approximation Algorithms, Quantum Compilation\\
%\textbf{Mathematics Subject Classification:} 90C27, 90C10, 81P68
\end{abstract}
	\section{Introduction}

\emph{Qubit routing by swap insertion} is a subroutine
in the process of compiling quantum algorithms onto specific hardware.
Quantum algorithms are usually formulated
in the circuit model of quantum computation,
see \eg \cite{NielsenChuang2010} for an introduction.
Such a circuit consists of wires representing qubits
as well as gates representing operations applied to subsets of qubits.
We refer to the qubits in the circuit as \emph{logical} qubits,
in contrast to the \emph{physical} qubits in quantum hardware.
Furthermore, we assume that the circuit
only contains gates acting on at most two qubits,
\ie single and two-qubit gates
(also known as \qm{$ \mathrm{SU}(4) $ circuits}).
This might already be the case for the input circuit to be compiled;
otherwise this can be achieved using the Solovay-Kitaev Theorem,
see \eg \cite[pp.\ 188--202]{NielsenChuang2010}.
In the circuit model, gates may be applied on any pair of logical qubits.
However, this is not the case in many
currently available quantum processors.
Here, two-qubit gates can only be applied on specific pairs
of physical qubits defined by the \emph{hardware connectivity graph}.
Now the task is to choose an initial allocation
of logical qubits in the quantum circuit
to physical qubits in the hardware graph
such that for each two-qubit gate the corresponding logical qubits
are located at neighbouring physical qubits.
The example in Fig.~\ref{fig:example} shows
that this is not always possible directly.

\begin{figure}[h]
	\centering
	\subfloat[]{
		\includegraphics[height=3cm]{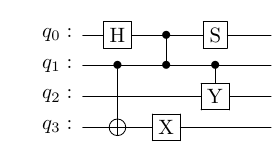}
		\label{fig:example_qc}
	}
	\qquad\qquad
	\subfloat[]{
		\includegraphics[height=3cm]{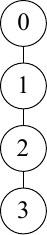}
		\label{fig:example_hg}
	}
	\caption{Example for a quantum circuit (a)
		consisting of four logical qubits
		as well as single and two-qubit gates.
		In the context of this work,
		the meaning of the gates
		is irrelevant. Only the subset of 
		qubits involved in a gate matters.
		The circuit is not compatible
		with the hardware connectivity graph shown in (b).}
	\label{fig:example}
\end{figure}

In such a case, it is necessary
to insert additional \emph{swap gates} into the algorithm.
They effectively swap the positions
of two logical qubits in the hardware graph.
The result is an $ \mathrm{SU}(4) $ circuit
meeting the connectivity restrictions
which is equivalent to the original one
when taking into account the permutation
resulting from initial allocation and swap gates.
A simple example for this procedure
is shown in Figure~\ref{fig:routing_example}.

\begin{figure}
	\centering
	\includegraphics[height=3cm]{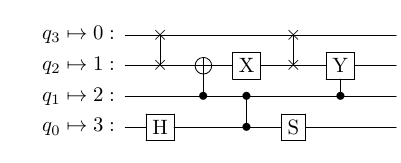}
	\caption{Example for routing by swap insertion.
		The given circuit is equivalent
		to the quantum circuit from Figure~\ref{fig:example_qc}
		and fulfills the hardware connectivity from Figure~\ref{fig:example_hg}.
		The equivalence holds under the initial allocation
		of logical qubits $ \set{q_0, q_1, q_2, q_3} $
		to physical qubits $ \set{0, 1, 2, 3} $ given by
		$ q_0 \mapsto 3, q_1 \mapsto 2, q_2 \mapsto 1, q_3 \mapsto 0 $.
		After the first swap the allocation changes to
		$ q_0 \mapsto 3, q_1 \mapsto 2, q_2 \mapsto 0, q_3 \mapsto 1 $,
		and after the second swap to
		$ q_0 \mapsto 3, q_1 \mapsto 2, q_2 \mapsto 1, q_3 \mapsto 0 $.}
	\label{fig:routing_example}
\end{figure}
%In a subsequent compiling step after routing by swap insertion has been performed,
%each gate in the routed circuit is decomposed into hardware native gates.
%This procedure is sometimes referred to as \emph{gate synthesis} (see \cite{Qiskit}).
%Several methods exists for this task,
%a common one is based on Cartan's KAK-decomposition~(\cite{KHANEJA2001}).
When solving the routing by swap insertion problem,
two possible objective functions arise naturally:
either one aims to minimize the execution time on the quantum processor,
which can be achieved by minimizing the output circuit depth,
or one can minimize the total gate number to be executed,
which amounts to minimizing the number of swap gates added.
Intuitively, both objectives are correlated.
However, it is easy to construct examples
where optimal-depth solutions are suboptimal
in terms of gate count and vice versa.
For an experimental study of the influence of depth and gate count
on the performance of quantum algorithms,
we refer to \cite{nannicini2021} and \cite{Scott2021noise}.

\paragraph{Existing methods for qubit routing.}

Many compilation procedures based on routing by swap insertion
have been developed so far.
Some of them may be classified by their local working principle:
they define an initial mapping of logical to physical qubits,
iterate through the circuit in temporal order
and change the allocation by adding swaps
when gates are encountered
which cannot be applied in the current allocation.
Of course, this general principle allows
for many different sophisticated procedures,
in particular when it comes to choosing the initial allocation.
Cowtan et al.'s \emph{Tket compiler}
(see \cite{cowtan2019,Sivarajah2020})
belongs to this class.
It uses a heuristic cost function to choose the next allocation.
Zuhlener et al.\ (see \cite{zulehner2019}) employ an A*-search algorithm
to define the next allocation.
Li et al.'s \emph{Sabre compiler} (see \cite{Li2019})
uses the reversibility of quantum circuits
to perform a bidirectional search for a good initial mapping.
The default compiler in IBM's SDK \emph{Qiskit}, \emph{Stochastic swap},
based on work by Bravi (see \cite{Qiskit}),
as well as the default compiler in Googles SDK \emph{Circ},
\emph{Greedy router} (see \cite{cirq_developers_2021_5182845})
also belong to this class.

A connection from routing by swap insertion
to the problems of token swapping 
and subgraph isomorphism has already
been established and employed
by Siraichi et al.\ in \cite{siraichi2018,Siraichi2019}
and by Childs et al.\ in \cite{childs2019}.
In their work, the subgraph isomorphism problem
arises in the context of searching for high-quality qubit-mappings,
whereas routing between mappings is translated to a token swapping problem.

Two approaches to routing by swap insertion
that do not rely on local swap insertion
but consider the whole circuit instead
are the SAT-based approach by Wille et al.\ (see \cite{wille2019})
and the approach by Nannicini et al.\
based on an integer programming model (see \cite{nannicini2021}).
These approaches suffer from exponential growth of problem size
and NP-hard problem complexity.
In practice, the running times exceed reasonable limits
already for moderate circuit sizes.

Furthermore, there are quantum compilation methods in the literature
which do not rely on swap insertion:
Kissinger and Meijer-Van De Griend consider circuits
only built from CNOT gates
and solve the compiling problem
by finding an equivalent hardware compatible CNOT circuit
(see \cite{Kissinger2020}).
Moro et al.\ use reinforcement learning for quantum compilation
(see \cite{Moro2021}).

Kim (see \cite{Kim2016}) compares
some of the aforementioned routing heuristics
on benchmarks with known optimal solution.
His results reveal a significant margin for improvement.

\paragraph{Our contribution.}

We provide a high-quality solution method
that efficiently exploits the capabilities of integer programming
to solve the routing by swap insertion problem.
The key is to decompose the problem into two separate subproblems,
of which the first one is solved via integer programming.
It determines an allocation sequence
that is compatible with the hardware constraints.
Herein, we employ a cost function
that is a lower bound on the total number of swaps required for routing.
This subproblem is called \emph{token allocation problem}.
For its solution, we develop a simplification of the binary model
introduced by Nannicini et al.\ in \cite{nannicini2021},
which results in a significant reduction in problem size.
Once an optimal allocation sequence has been found,
each pair of subsequent allocations defines a token swapping instance.
These token swapping problems are solved by an approximation algorithm
based on the work of Miltzow et al.\ in \cite{miltzow2016}.
Also, we develop an exact algorithm for token swapping.
By comparison of both solution techniques,
we conclude that the approximation algorithm
typically delivers high-quality solutions
in drastically reduced solution time.

Re-targeting the gates according to the allocation sequence
and inserting the swap gates from the token swapping solutions
finally yields the compiled circuit.
In \cite{Siraichi2019}, a similar decomposition approach, called \emph{Bounded mapping tree} (BMT), is proposed
which, however, is based on dynamic programming.
In contrast to other heuristic methods,
we are able to give bounds on the quality of the obtained solution.
\textcolor{blue}{
We perform extensive numerical experiments on
several benchmark instances from the literature,
covering a broad range of different hardware graphs.
The results show that the proposed approach
finds close-to-optimal solutions
and outperforms well-established heuristics.}
In comparison to the exact approach from \cite{nannicini2021},
on which the present work is based,
our method takes much less time.
This makes the proposed routing procedure applicable
for practically relevant circuit sizes.
Furthermore, our experiments on actual quantum hardware
show that quantum computation benefits from the proposed routing method.

\paragraph{Structure.}

This work is structured as follows.
Section~\ref{sec:prelim} introduces the token allocation problem and the token swapping problem.
\textcolor{blue}{
In Section~\ref{sec:alloc}, we further analyse the token allocation problem
and motivate the use of integer programming for its solution.
We show the NP-hardness of the problem at hand, introduce the binary model and derive valid inequalities for its linear relaxation.
In Section~\ref{sec:ts} we study solution methods for the token swapping problem.
The efficient approximation algorithm employed in our routing procedure
as well as the exact branch and bound algorithm used for benchmarking the former are developed.}
Section~\ref{sec:exp} discusses numerical results for the two subproblems of token allocation
and token swapping as well as for the entire routing problem.
Additionally, we present experimental results for example quantum algorithms executed on actual hardware when routed with different methods.
Finally, in Section \ref{sec:conc} we give a conclusion and indicate directions of further research.

	\section{Preliminaries and Definitions}
\label{sec:prelim}

We start by formally defining the token allocation problem
as well as the token swapping problem.
While the latter is well known,
the former has not been extensively studied yet,
to the authors' best knowledge.
It already
occurred in \cite{nannicini2021} and \cite{siraichi2018,Siraichi2019}
as part of a larger problem.
In the context of qubit routing,
the token allocation problem can be informally described as follows.
Its input is a hardware graph
representing the connectivity of the physical qubits
together with a quantum algorithm acting on a set of logical qubits (\qm{tokens}).
The algorithm is described as a sequence of layers.
In each layer, a set of two-qubit gates
needs to be performed in parallel.
Now the task is to find an allocation for each layer
such that the gates can be executed,
\ie logical qubits involved in gates
are located at neighbouring physical qubits.

We remark that grouping gates into layers
needs to be performed by the routing procedure,
since an algorithm formulated as a circuit 
is not by itself divided into layers
but simply modeled as a sequence of individual gates.
Grouping gates into layers containing more than a single gate
restricts the set of feasible solutions,
since changing allocation between gates in the same layer
is not allowed.
However, grouping significantly reduces the number of layers
and thus the size of the token allocation problem.

For a given allocation sequence, we define its cost
as the sum of all distances logical qubits move
on the hardware graph between subsequent allocations.
Herein, the distance a logical qubit moves between two allocations
is defined as the length of a shortest path connecting the vertices
in the hardware graph to which the qubit is allocated.
The motivation for this choice of objective
is that the number of swaps needed
for routing between subsequent allocations
$ a, a'\colon Q \xrightarrow{1:1} V $
is bounded from below by
\begin{equation}
	\frac{1}{2}\sum_{q \in Q} d_H\left(a(q), a'(q)\right),
\end{equation}
where $ d_H\colon V\times V\rightarrow \Z_0^+ $
denotes the \emph{edge distance} in~$H$,
\ie the number of edges in a path of minimal length
connecting two nodes.
A proof of this result on the token swapping problem
is found \eg in \cite{miltzow2016}. 
This means that instead of minimizing
the total number of swaps required,
we minimize a lower bound on this value
which can be computed much more easily.

We now define the token allocation problem formally
as a combinatorial optimization problem.
\begin{definition}[Token allocation problem]
	An instance of the \emph{token allocation problem (TAP)}
	is given by a tripel $ (H, Q, \Gamma) $, where
	\begin{itemize}
		\item $ H = (V, E) $ is a connected undirected graph,
		\item $Q$ is a set of tokens of size $ \card{Q} = \card{V} $,
		and
		\item $ \Gamma = G^1, G^2, \ldots, G^L $
		is a finite sequence of sets of disjoint token pairs, \ie
		\begin{enumerate}
			\item[(i)] $ G^t = \set{(p^1_t, q^1_t), (p^2_t, q^2_t),
				\ldots, (p^{\card{G^t}}_t, q^{\card{G^t}}_t)}
			\subset Q \times Q,\,
			q^j_t \neq p^j_t\quad
			\forall j \in \set{0, \ldots, \card{G^t}} $ and
			\item[(ii)]	$ \set{p^i_t, q^i_t} \cap \set{p^j_t, q^j_t}
			= \emptyset\quad
			\forall i \neq j \in \set{0, \ldots, \card{G^t}},\,
			\forall t \in \set{1, \ldots, L} $.
		\end{enumerate}
	\end{itemize}
	A feasible solution is given by a sequence $ a_1, \ldots, a_L $
	of bijective mappings $ a_t\colon Q \xrightarrow{1:1} V $
	such that at each time step~$t$ the token pairs in~$ G^t $
	are allocated to neighbouring vertices in $H$:
	\begin{align*}
		\forall 1\leq t \leq L:\, \forall (q_1, q_2) \in G^t:
		\set{a_t(q_1),a_t(q_2)}\in E.
	\end{align*}
	The cost of a feasible solution $ a_1, \ldots, a_L $
	is defined as
	\begin{equation}
		\frac{1}{2} \sum_{t = 0}^{L - 1}
		\sum_{q \in Q} d_H\left(a_t(q), a_{t + 1}(q)\right)\,.
	\end{equation}
	The goal is to find a feasible allocation
	minimizing this cost function.
	\label{def:alloc}
\end{definition}
In the context of qubit routing,
the vertex set~$V$ represents the physical qubits
while the edge set~$E$ represents the set of physical qubit pairs
between which two-qubit gates can be applied.
The logical qubits are represented by the set~$Q$,
and the sets $ G^t \in \Gamma $, $ 1 \leq t \leq L $, 
are the gates to be executed in parallel at time step~$t$.

Next, we give an informal description
of the token swapping problem.
Its input is an undirected, connected graph
and a set of tokens.
Initially, each vertex holds a unique token.
Further, each vertex is assigned a target token.
Now the task is to move tokens along the edges of the graph
such that each vertex holds its desired token.
However, movement of tokens is restricted to swapping adjacent tokens.
Formally, the token swapping problem is defined as follows.
\begin{definition}[Token swapping problem]
	An instance of the \emph{token swapping problem}
	is given by a tupel $ (H, Q, a, a') $, where
	\begin{itemize}
		\item $ H = (V, E) $ is a connected undirected graph,
		\item $Q$ is a set of tokens with size $ \card{Q} = \card{V} $,
		\item $ a\colon Q \xrightarrow{1:1} V $
		is an initial allocation, and
		\item $ a'\colon Q \xrightarrow{1:1} V$ is a final allocation.
	\end{itemize}
	For two nodes $ i, j \in V $, $ i \neq j $,
	the transposition $ (i, j) $
	is the unique bijective mapping from~$V$ onto itself
	interchanging exactly the two elements~$i$ and~$j$.
	Let $ \TT_H $ denote the set of transpositions on~$V$
	restricted to~$E$,
	\ie $ (i, j) \in \TT_H \Leftrightarrow \set{i, j} \in E $.
	A feasible solution $ \mS $
	is a finite sequence of transpositions (\qm{swaps})
	$ \mS = \tau_1, \tau_2 \ldots, \tau_N $,
	where $ \tau_i \in \mathcal{T}_H $ for $ i \in \set{1, \ldots, N} $,
	such that
	\begin{equation}
		a = \tau_N \circ \ldots \circ \tau_2 \circ \tau_1 \circ a'\ .
	\end{equation}
	The cost of a feasible solution is given by its length~$N$.
	The goal of the token swapping problem
	is to find a feasible solution minimizing these costs.
	\label{def:td}
\end{definition}
There is a close link between Problems~\ref{def:alloc} and~\ref{def:td}:
every feasible solution to a TAP instance
defines a sequence of token swapping instances in a canonical way.
In the application to qubit routing,
a solution to this sequence of problems
together with the TAP solution
results in a solution to the qubit routing by swap insertion problem.
Note that this solution might not be optimal in terms of used swaps.
	\section{An Analysis of the Token Allocation Problem}
\label{sec:alloc}

In the following, we analyse the TAP
from Definition~\ref{def:alloc}.
After showing its NP-hardness,
we derive an integer programming formulation
based on edge flows.
Additionally, we strengthen the model
by introducing valid inequalities.

\subsection{NP-hardness}

To motivate the use of integer programming
for the solution of the TAP,
we first show its NP-hardness.
For this purpose, we construct a reduction
of the \emph{subgraph isomorphism problem} to the TAP.
A similar reduction was already sketched by Siraichi et al. in \cite{siraichi2018} to
show NP-completeness of a problem called \emph{qubit assignment problem}.
\textcolor{blue}{This problem is equivalent to the decision version of the TAP.
Here, we transfer their main idea to the TAP}
and work out the reduction in detail.
\begin{definition}[Subgraph isomorphism problem]
	Given two graphs $H$, $H'$,
	the \emph{edge-induced (node-induced)
	subgraph isomorphism problem (SGI)}
	asks whether there is an edge-induced (node-induced) subgraph of~$H$
	isomorphic to~$H'$.
\end{definition}
The complexity of the node-induced SGI is well known.
\begin{theorem}[\cite{pruim2005complexity}]
	The node-induced subgraph isomorphism problem is NP-complete.
\end{theorem}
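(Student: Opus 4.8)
The plan is to verify the two standard conditions for NP-completeness. For membership in NP, I would take as a certificate a vertex subset $S$ of $H$ together with a bijection $\varphi$ from $V(H')$ onto $S$; the verifier checks, in time polynomial in the number of vertices of $H$ and $H'$, that $\card{S}$ equals the number of vertices of $H'$ and that for every pair $u,v \in V(H')$ one has $\set{u,v} \in E(H')$ if and only if $\set{\varphi(u),\varphi(v)} \in E(H)$. This biconditional on all pairs is precisely the statement that $H'$ is isomorphic to the node-induced subgraph of $H$ on $S$, and it amounts to a quadratic number of adjacency queries, so the node-induced SGI lies in NP.

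For NP-hardness, I would reduce from the \emph{clique problem}, which is a classical NP-complete problem (Karp). Given an instance $(G,k)$ of clique, I output the SGI instance $(H,H') := (G, K_k)$, where $K_k$ denotes the complete graph on $k$ vertices; this transformation is clearly computable in polynomial time. The key observation is that for a vertex set $S$ of $G$ with $\card{S} = k$, the subgraph of $G$ induced on $S$ is isomorphic to $K_k$ exactly when every two vertices of $S$ are adjacent in $G$, i.e.\ exactly when $S$ is a $k$-clique of $G$. Hence $G$ has a clique of size $k$ if and only if $K_k$ is isomorphic to some node-induced subgraph of $H = G$, which establishes correctness of the reduction. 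Combining both parts yields NP-completeness.

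I do not expect a genuine obstacle here, since this is a textbook-level result; the only points needing a little care are fixing the precise notion of node-induced subgraph so that the equivalence with clique is tight (in particular handling the trivial cases $k > \card{V(G)}$), and choosing a concrete NP-complete starting problem rather than relying on SGI folklore. An alternative reduction from \emph{independent set} works identically, with $H'$ taken to be the edgeless graph on $k$ vertices, since a $k$-vertex node-induced subgraph is edgeless precisely when the chosen vertices form an independent set.
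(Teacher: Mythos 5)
The paper states this theorem purely as a cited result from the literature and supplies no proof of its own, so there is no in-paper argument to compare against. Your proof is the standard textbook one --- membership in NP via the subset-plus-bijection certificate checked with quadratically many adjacency queries, and NP-hardness via the reduction $(G,k) \mapsto (G, K_k)$ from clique --- and it is correct as written, including the observation that a $k$-vertex node-induced subgraph of $G$ is isomorphic to $K_k$ exactly when its vertex set is a $k$-clique.
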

To transfer the NP-completeness to the edge-induced SGI,
we need the notion of a \emph{line graph}.
\begin{definition}[Line graph]
	The \emph{line graph} $ \LL(G) $ of a graph~$G$
	possesses a node for every edge in~$G$.
	Edges in $ \LL(G) $ connect two nodes
	if and only if the corresponding edges in $G$
	share a common node.
\end{definition}
The proof of the following result is straight-forward and omitted here for brevity.
\begin{lemma}
	Two graphs $H$ and $H'$ are node-induced subgraph isomorphic
	if and only if the corresponding line graphs~$ \LL(H) $
	and~$ \LL(H') $ are edge-induced subgraph isomorphic. 
\end{lemma}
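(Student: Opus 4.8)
The plan is to prove both directions of the equivalence by translating between subgraph-isomorphism witnesses on $H,H'$ and on $\LL(H),\LL(H')$. The crucial structural fact I would rely on is the Whitney isomorphism theorem: for connected graphs other than the exceptional pair $K_3$ and $K_{1,3}$, a graph is determined up to isomorphism by its line graph, and moreover an isomorphism of line graphs is induced by an isomorphism of the underlying graphs. So the first step is to observe that a node-induced isomorphic copy of $H'$ inside $H$ is nothing other than a subset $W\subseteq V(H)$ together with a bijection $W\to V(H')$ preserving adjacency \emph{and} non-adjacency. Under the line-graph operator, the edges with both endpoints in $W$ — that is, the edge set $E(H[W])$ — correspond to a node subset of $\LL(H)$, and I claim this node subset, with the edges of $\LL(H)$ \emph{induced} on it, is exactly $\LL(H[W])\cong\LL(H')$. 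That containment is an edge-induced (equivalently, here, a subgraph) isomorphism of the desired type, giving the forward direction.

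For the converse I would start from an edge-induced subgraph isomorphism $\varphi$ from a subgraph $S\subseteq\LL(H)$ onto $\LL(H')$. The nodes of $S$ are a set $F\subseteq E(H)$ of edges of $H$; let $W\subseteq V(H)$ be the set of their endpoints, so $F\subseteq E(H[W])$. The first subtlety is that I must show $F$ can be taken to be \emph{all} of $E(H[W])$, i.e. that the copy of $\LL(H')$ sits inside $\LL(H)$ as a genuine line-graph of an induced subgraph and not as some sparser edge-induced subgraph; this is where the hypothesis that $H'$ is connected and the combinatorics of edge-sets forming line graphs (Krausz / Beineke characterization: line graphs are exactly the graphs admitting a partition of edges into cliques with each vertex in at most two cliques) come in. Once $\LL(H[W])\cong\LL(H')$ is established, I invoke Whitney's theorem to lift this to a graph isomorphism $H[W]\cong H'$, after disposing of the exceptional case $H'\simeq K_3$ (whose line graph $K_3$ is also the line graph of $K_{1,3}$) by checking it directly — there one simply notes that $H$ contains a triangle or a claw and in either situation an induced $K_3$ can be located, or one sidesteps it by noting $K_3$ and $K_{1,3}$ are themselves detectable.

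Concretely the steps in order are: (1) restate both notions of subgraph isomorphism as structure-preserving injections; (2) forward direction — push a node-induced copy of $H'$ through $\LL(\cdot)$ and check the induced subgraph of $\LL(H)$ on the image edges is exactly $\LL(H'[\,\cdot\,])$; (3) converse — from an edge-induced copy of $\LL(H')$ recover the vertex set $W$ of incident endpoints and argue the copy equals $\LL(H[W])$ using connectivity of $H'$; (4) apply Whitney's line-graph isomorphism theorem to conclude $H[W]\cong H'$; (5) handle the $K_3$/$K_{1,3}$ exception separately. The main obstacle I anticipate is step (3): a priori an edge-induced subgraph of $\LL(H)$ isomorphic to $\LL(H')$ need not obviously be of the form $\LL(H[W])$, and ruling out ``spurious'' embeddings requires the clique-cover characterization of line graphs together with connectedness of $H'$ to force the edge set to close up to an induced subgraph. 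Everything else is bookkeeping, and — as the excerpt itself notes — the full write-up is omitted here for space.
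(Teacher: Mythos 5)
The paper states this lemma without proof (``Due to space limitations\dots''), so there is no in-paper argument to compare yours against; judged on its own, your plan founders exactly at the point you yourself flag as the main obstacle, and that obstacle is fatal: the ``if'' direction of the statement is false as written, so no completion of your step (3) can exist. Take $H = K_3$ and $H' = P_3$ (the path on three vertices). Then $\LL(H) = K_3$ and $\LL(H') = K_2$, and a single edge of $\LL(H)$ is an edge-induced subgraph isomorphic to $\LL(H')$; but $K_3$ has no node-induced subgraph isomorphic to $P_3$. Here $H'$ is connected and no Whitney exception is involved, yet the edge-induced copy of $\LL(H')$ --- a pair of adjacent edges $e,f$ of $H$ --- does not ``close up'' to $\LL(H[W])$ for $W$ the set of their endpoints: $\LL(H[W]) = K_3 \neq K_2$. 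The copy is a perfectly legitimate line graph, namely the line graph of a \emph{non-induced} subgraph of $H$, and neither connectivity of $H'$ nor the Krausz/Beineke clique-partition characterization rules this out. Only your forward direction is sound (and even there one must exclude components of $H'$ that are single edges or isolated vertices, since their line graphs contain isolated vertices, which no edge-induced subgraph can have).

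Your handling of the exceptional pair is also wrong on its own terms: if $H$ contains an induced claw but no triangle (e.g.\ $H = K_{1,3}$ itself) and $H' = K_3$, then $\LL(H') = K_3$ is an edge-induced subgraph of $\LL(H) = K_3$, but no induced $K_3$ ``can be located'' in $H$ --- this is a second, independent counterexample to the stated equivalence, not a case that can be disposed of. You also invoke a connectivity hypothesis on $H'$ that the lemma does not contain and that fails for the connectivity graphs $C(\Gamma)$ arising in the paper's application. The NP-completeness the paper wants in its Corollary is nevertheless true, but it should be obtained without this lemma, e.g.\ by reducing CLIQUE or Hamiltonian path directly to (non-induced) subgraph isomorphism, which is what edge-induced SGI amounts to for pattern graphs without isolated vertices.
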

Using this result, the node-induced SGI
can be reduced to the edge-induced SGI.
\begin{corollary}
	The edge-induced subgraph isomorphism problem is NP-complete.
\end{corollary}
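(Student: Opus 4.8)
The plan is to reduce the edge-induced SGI to the node-induced SGI using the preceding lemma, thereby transferring NP-completeness. First I would observe that membership in NP is immediate: given a candidate edge subset $F \subseteq E(H)$ and a bijection between the endpoints of these edges and $V(H')$, one can verify in polynomial time that this bijection is an isomorphism between the edge-induced subgraph $(V(F), F)$ and $H'$. So the real content is NP-hardness.

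For the hardness part, the plan is to build the reduction as follows. Given an instance $(H, H')$ of the node-induced SGI, I would compute the line graphs $\LL(H)$ and $\LL(H')$, which can be done in polynomial time since $\LL(H)$ has $\card{E(H)}$ nodes and at most $O(\card{E(H)}^2)$ edges, and similarly for $H'$. I then output $(\LL(H), \LL(H'))$ as the instance of the edge-induced SGI. By the Lemma stated above, $H$ and $H'$ are node-induced subgraph isomorphic if and only if $\LL(H)$ and $\LL(H')$ are edge-induced subgraph isomorphic. Hence the reduction is correct, and since node-induced SGI is NP-complete by the cited theorem of \cite{pruim2005complexity}, the edge-induced SGI is NP-hard; combined with membership in NP, it is NP-complete.

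The main obstacle is not in this corollary itself but is hidden in the Lemma it relies on, which the excerpt states without proof. The delicate direction is recovering a node-induced subgraph isomorphism of the original graphs from an edge-induced subgraph isomorphism of their line graphs: this can fail for small exceptional cases, most notably the Whitney graph-isomorphism phenomenon where $K_3$ and $K_{1,3}$ have the same line graph $K_3$. So a fully rigorous treatment would need to either invoke Whitney's theorem (line graphs determine the graph up to isomorphism, with the single exception $K_3$ versus $K_{1,3}$) and check that the exception does not break the reduction, or restrict attention to connected graphs on at least a fixed number of vertices and handle the small cases by brute force. Since the corollary as stated is permitted to assume the Lemma, I would simply chain the Lemma with the cited theorem as above; if pressed to prove the Lemma, I would expect the edge-to-node-isomorphism direction — and the need to rule out or absorb the Whitney exceptions — to be where essentially all the work lies.
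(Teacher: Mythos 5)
Your proposal matches the paper's (implicit) argument exactly: map a node-induced SGI instance $(H,H')$ to the edge-induced SGI instance $(\LL(H),\LL(H'))$ and invoke the stated lemma together with the NP-completeness of node-induced SGI. Your additional remarks on NP membership and the Whitney exception concern the unproved lemma rather than the corollary itself and do not change the approach.
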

For the reduction of edge induced SGI to TAP, we need
\begin{definition}[Connectivity graph]
	\label{def:conng}
	For a finite set of tokens $Q$,
	let $ G \coloneqq
		\set{(q_1, p_1), \ldots, (q_n, p_n)} \subset Q \times Q $,
	with $ q_i\neq p_i $ for $ i = 1, \ldots, n $
	be a set of token pairs.
	Then the \emph{connectivity graph} of $G$
	is given by $ C(G) = (\tilde{Q}, E_C)$ with
	\begin{equation}
		\tilde{Q} \coloneqq \bigcup_{1 \leq k \leq n} \set{q_k} \cup \set{p_k}
		\quad \text{and} \quad
		E_C \coloneqq \set{\set{q, p} \mid (q, p) \in G}.
	\end{equation}
\end{definition}
An example of a connectivity graph
is depicted in Figure~\ref{fig:conng_example}.
%In the following, we will use a result from \cite{siraichi2018}
%which we state here without proof.
The following Lemma establishes a connection between the connectivity graph and the TAP.
The proof is not hard and omitted here. 
\begin{figure}
	\centering
	\includegraphics[height=2cm]{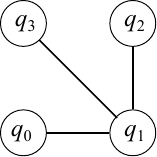}
	\caption{Example for a connectivity graph.
		This example is constructed from the quantum circuit
		in Figure~\ref{fig:example_qc}.
		Logical qubits in the circuit are interpreted as tokens,
		gates are interpreted as token pairs.}
	\label{fig:conng_example}
\end{figure}
\begin{lemma}[\textcolor{blue}{\cite{siraichi2018}}]
	\label{lem:sir}
	A TAP instance $ (H, Q, \Gamma) $ has optimal value of~$0$
	if and only if there is an edge-induced subgraph of $H$
	isomorphic to the connectivity graph \textcolor{blue}{$ C(\bigcup_{G^t\in \Gamma}G^t) $}.
\end{lemma}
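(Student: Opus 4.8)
The plan is to first identify what an optimal value of $0$ means and then reduce the statement to an elementary fact about bijections. Since the cost of a feasible solution is a sum of the nonnegative quantities $\tfrac{1}{2}\,d_H(a_t(q),a_{t+1}(q))$, it equals $0$ if and only if $a_t(q)=a_{t+1}(q)$ for every token $q$ and every consecutive pair of layers, that is, if and only if $a_1=a_2=\dots=a_L$. Hence the optimal value is $0$ precisely when there is a \emph{single} bijection $a\colon Q\xrightarrow{1:1}V$ that is feasible for all layers simultaneously; and being feasible for all layers means exactly that $\{a(q_1),a(q_2)\}\in E$ for every token pair $(q_1,q_2)$ occurring in some $G^t$. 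Reading $C(\Gamma)$ as the connectivity graph $C\!\left(\bigcup_{t=1}^{L}G^t\right)$, with vertex set $\tilde{Q}$ and edge set $E_C$, this is the same as requiring a bijection $a\colon Q\to V$ that maps every edge of $C(\Gamma)$ to an edge of $H$. So it remains to show that such an $a$ exists if and only if $H$ has an edge-induced subgraph isomorphic to $C(\Gamma)$.

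For the forward direction I would take such an $a$, set $\phi\coloneqq a|_{\tilde{Q}}$, and let $E'\coloneqq\{\{a(q),a(p)\}\mid\{q,p\}\in E_C\}\subseteq E$. The edge-induced subgraph of $H$ with edge set $E'$ has vertex set $a(\tilde{Q})$: by Definition~\ref{def:conng} every vertex of $C(\Gamma)$ is an endpoint of some $E_C$-edge, so the union of the endpoints of the edges in $E'$ is exactly $a(\tilde{Q})$. Then $\phi$ is a graph isomorphism from $C(\Gamma)$ onto this subgraph — the implication $\{q,p\}\in E_C\Rightarrow\{\phi(q),\phi(p)\}\in E'$ holds by construction, while the converse follows from injectivity of $a$, which prevents two distinct edges of $C(\Gamma)$ from being identified in $H$.

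For the reverse direction, let $(V',E')$ be an edge-induced subgraph of $H$ and $\phi\colon\tilde{Q}\to V'$ an isomorphism onto it. Since $\card{Q}=\card{V}$ we have $\card{Q\setminus\tilde{Q}}=\card{V\setminus V'}$, so $\phi$ extends to a bijection $a\colon Q\to V$ by fixing any bijection between the two complements. Every token pair $(q_1,q_2)\in G^t$ has $\{q_1,q_2\}\in E_C$, hence $\{a(q_1),a(q_2)\}=\{\phi(q_1),\phi(q_2)\}\in E'\subseteq E$; thus the constant sequence $a_1=\dots=a_L=a$ is feasible with cost $0$, and as $0$ is a trivial lower bound on the cost the optimal value is $0$.

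There is little conceptual difficulty here; the steps that need care are bookkeeping. The one I would watch most closely is verifying that the vertex set of the chosen edge-induced subgraph is exactly $a(\tilde{Q})$ — so that no isolated vertex is spuriously added or dropped — which is precisely where the specific form of $\tilde{Q}$ in Definition~\ref{def:conng} is used; the other is the cardinality argument $\card{Q}=\card{V}$ that lets one extend $\phi$ to all of $Q$ in the backward direction. I would also make explicit up front that a cost of $0$ forces all layer allocations to coincide, since that is the bridge between the sequential TAP and the single-allocation reformulation on which the whole argument rests.
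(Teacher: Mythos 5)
The paper states this lemma without proof (``The proof is not hard and omitted here''), so there is no official argument to compare against; your proof is correct and is clearly the intended one. The reduction of ``optimal value $0$'' to the existence of a single allocation $a_1=\dots=a_L$ feasible for all layers, followed by the translation between such a bijection and an edge-induced subgraph isomorphism onto $C(\Gamma)$, is exactly the natural route, and you handle the two points that actually require care -- that injectivity of $a$ gives the reverse edge implication in the forward direction, and that $\card{Q}=\card{V}$ together with $\card{\tilde{Q}}=\card{V'}$ lets you extend $\phi$ to a full bijection in the backward direction.
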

With these preliminaries established,
we now study the complexity of the TAP.
\begin{theorem}
	\label{thm:np}
	The token allocation problem is NP-hard.
\end{theorem}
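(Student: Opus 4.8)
The plan is to reduce the edge-induced subgraph isomorphism problem (SGI) to the TAP, using the machinery already set up in the excerpt. Given an instance $(H, H')$ of edge-induced SGI, I would construct a TAP instance $(H, Q, \Gamma)$ as follows: let $Q$ be a set of tokens with $\card{Q} = \card{V(H)}$, and encode $H'$ as a connectivity graph. Concretely, I label the nodes of $H'$ by a subset of the tokens in $Q$ (padding $Q$ with extra tokens if $H'$ has fewer nodes than $H$ has vertices), and for each edge $\set{q, p}$ of $H'$ I introduce a token pair $(q, p)$. These pairs must then be distributed into layers $G^1, \ldots, G^L$ so that within each layer the pairs are vertex-disjoint, as required by conditions (i) and (ii) in Definition~\ref{def:alloc}. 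This is always possible: a proper edge colouring of $H'$ partitions $E(H')$ into matchings, and each colour class becomes one layer $G^t$. By Vizing's theorem one can do this with $L \le \Delta(H') + 1$ layers, so the reduction is polynomial in size.

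The next step is to verify that this reduction is correct, i.e.\ that the constructed TAP instance has optimal value $0$ if and only if $H'$ is edge-induced subgraph isomorphic to $H$. This is exactly Lemma~\ref{lem:sir}, provided that the connectivity graph $C(\Gamma)$ of the layered sequence $\Gamma = G^1, \ldots, G^L$ equals $H'$ (up to isomorphism). Here I need to check that taking the union of all token pairs across all layers and forming the connectivity graph as in Definition~\ref{def:conng} recovers precisely $H'$: the node set is the set of tokens appearing in some pair, which are exactly the labels of $V(H')$ (isolated vertices of $H'$, if any, can be accommodated since $\card{Q}=\card{V(H)}$ gives spare tokens, but one should note that edge-induced subgraphs have no isolated vertices, so w.l.o.g.\ $H'$ has none), and the edge set is exactly the set of pairs, i.e.\ $E(H')$. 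Hence $C(\Gamma) \cong H'$, and Lemma~\ref{lem:sir} gives: the TAP instance has optimum $0$ $\iff$ there is an edge-induced subgraph of $H$ isomorphic to $H'$.

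To finish, I combine this with the Corollary stating that edge-induced SGI is NP-complete. Since deciding whether a TAP instance has optimal value $0$ would solve edge-induced SGI, and the reduction is polynomial-time computable, the decision version of TAP is NP-hard; therefore the TAP (as an optimization problem) is NP-hard. I would also remark that TAP is in NP as a decision problem (a feasible allocation sequence of cost $0$ is a polynomial-size certificate checkable in polynomial time using shortest-path distances), so the decision version is in fact NP-complete, though only NP-hardness is claimed in Theorem~\ref{thm:np}.

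The main obstacle is not conceptual but bookkeeping: I must be careful that the layering step (edge colouring of $H'$) produces a $\Gamma$ whose connectivity graph is genuinely $H'$ and not something with fewer edges — this requires that no edge of $H'$ is lost and no spurious edges are introduced, which follows because distinct edges of $H'$ go to (possibly different) layers but are never merged, and the connectivity graph construction takes a set union. A secondary subtlety is the vertex-count condition $\card{Q} = \card{V(H)}$ in the TAP definition: the reduction must use $H$ itself (not a subgraph) as the hardware graph and simply not constrain the surplus tokens, so they are free to be allocated to the leftover vertices of $H$ at zero cost. Once these points are checked, the argument is a direct chain through Lemma~\ref{lem:sir} and the edge-induced SGI Corollary.
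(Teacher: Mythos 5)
Your proof is correct and follows essentially the same route as the paper: reduce edge-induced SGI to the decision version of the TAP by encoding $H'$ as the connectivity graph $C(\Gamma)$ of the constructed gate sequence and then invoking Lemma~\ref{lem:sir} together with the NP-completeness of edge-induced SGI. The only difference is that the paper sidesteps your Vizing-based layering entirely by placing each edge of $H'$ in its own singleton layer $G^k = \set{(p_k, q_k)}$, which makes the disjointness conditions in Definition~\ref{def:alloc} trivially satisfied.
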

\begin{proof}
    To show NP-hardness,
	we construct a polynomial reduction
	of the edge-induced SGI
	to the decision version of the TAP.
	Let $ H = (V,E) $ and $ H'=(V',E') $ be two graphs.
	For the SGI instance $ (H, H') $,
	we construct an TAP instance $ (H, Q, \Gamma) $.
	Set $ Q \coloneqq V $ and construct~$ \Gamma $ as follows:
	choose an arbitrary order $ 1, \ldots, \card{E'} $
	of the edges in~$ E' $,
	and choose an arbitrary order of the two vertices in each edge.
	This yields a sequence \textcolor{blue}{of edges} $ (p_1, q_1), (p_2, q_2), \ldots,
	(p_{\card{E'}}, q_{\card{E'}}) $, \textcolor{blue}{where}
	$ \set{p_k, q_k} \in E' $
	$ \forall k \in \set{1, \ldots, \card{E'}} $.
	Now, set $ G^k \coloneqq \set{(p_k, q_k)} $
	for all $ k \in \set{1, \ldots, \card{E'}} $ and
	$ \Gamma \coloneqq G^1, \ldots, G^{\card{E'}} $.
	Then we have $ C(\Gamma) = H' $.
%	with \smash{$ G \coloneqq \bigcup_{k = 1}^{\card{E'}} G^k $}.
	Finally, Lemma~\ref{lem:sir} implies
	that the TAP instance $ (H, Q, \Gamma) $ has optimal value~$0$
	if and only if there is an edge-induced subgraph isomorphism
	from $H'$ to $H$.
\end{proof}
\textcolor{blue}{For NP-hard discrete optimization problems, integer
programming methods such as branch-and-bound and branch-and-cut are
known to be particularly well suited. We will thus follow this route
in the following. %we
%  justifies the use of integer programming
%for the TAP in our routing algorithm, which
%has exponential worst-case complexity.
}
\subsection{Network Flow Model}
\label{sec:edge}

In \cite{nannicini2021}, Nannicini et al.\
formulate the entire routing by swap insertion problem as a network flow problem.
We adjust their model to the TAP.
This results in a model significantly reduced in size
which, of course, is due to the fact
that only a subproblem of routing by swap insertion is modeled.
Let $ H = (V, E) $ be a connected graph,
and let $ (H, Q, \Gamma) $ be a TAP instance.
We introduce the directed arc set
$ A_H \coloneqq \bigcup_{\set{u, v} \in E}
	\set{(u, v)} \cup \set{(v, u)} $
as well as variables with the following interpretations.
The binary variables  $ x^t_{q, i, j} \in \set{0,1} $
take a value of~$1$ if qubit $ q \in Q $
moves from node $ i \in V $ to node $ j \in V $
between time steps $t$ and $ t + 1 $ from $ \set{1, \ldots, L} $,
and~$0$ otherwise.
The auxiliary binary variables $ w_{q, i}^ t \in \set{0,1} $
indicate by a value of $1$ whether qubit $ q \in Q $
is located at node $ i \in V $
in time step $ t \in \set{1, \ldots, L} $,
or~$0$ if not.
Further binary auxiliary variables $ y^t_{(p ,q), (i, j)} \in \set{0,1} $
express whether gate $ (p, q) \in G^t $
is performed along edge $ (i, j) $, value~$1$,
or not, value~$0$.
With these notions, the TAP can be modeled
by the following \textcolor{blue}{quadratic} binary program:
\begin{mini!}
{x}{\displaystyle\sum_{t = 1}^{L-1}\displaystyle\sum_{q\in Q}\displaystyle\sum_{i,j\in V\times V} \textcolor{blue}{d_H(i,j)} x^t_{q,i,j}\label{eq:Objective}}
{\label{eq:edgemodel}}{}
\addConstraint{w^t_{q,i}}			{=\displaystyle\sum_{j\in V}x^t_{q,i,j}\label{eq:flowout} }{\quad\forall 1\leq t < L,\forall i \in V,\forall q \in Q}
\addConstraint{w^t_{q,i}}			{=\displaystyle\sum_{j\in V}x^{t-1}_{q,j,i}\label{eq:fowin} }{\quad\forall 1 < t \leq L,\forall i \in V,\forall q \in Q}
\addConstraint{\displaystyle\sum_{(i,j)\in A_H} y^t_{(p,q),(i,j)}}{=1\label{eq:gates} }{\quad\forall 1\leq t\leq L,\forall (p,q)\in G^t}
\addConstraint{y^t_{(p,q),(i,j)}}{= w_{p,i}^t\cdot w_{q,j}^t\label{eq:ydef}}{\quad\forall 1\leq t\leq L,\forall (p,q)\in G^t,\forall (i,j)\in A_H}
\addConstraint{\displaystyle\sum_{i\in V} w_{q,i}^t}{= 1\label{eq:bij1}}{\quad\forall 1\leq t\leq L,\forall q\in Q}
\addConstraint{\displaystyle\sum_{q\in Q} w_{q,i}^t}{= 1\label{eq:bij2}}{\quad\forall 1\leq t\leq L,\forall i\in V}
\addConstraint{ w_{q,i}^t}{\in \{0,1\}\label{eq:binw}}{\quad\forall 1\leq t\leq L,\forall q\in Q,\forall i\in V}
\addConstraint{ x^t_{q,i,j}}{\in \{0,1\}\label{eq:binx}}{\quad\forall 1\leq t\leq L-1,\forall q\in Q,\forall (i,j)\in V\times V}
\addConstraint{ y^t_{(p,q),(i,j)}}{\in \{0,1\}\label{eq:biny}}{\quad\forall 1\leq t\leq L,\forall (p,q)\in G^t,\forall (i,j)\in A_H.}
\end{mini!}
\\[-\baselineskip]
Constraints~\eqref{eq:flowout} and~\eqref{eq:fowin}
ensure logical qubit conservation.
Via Constraints~\eqref{eq:gates},
we enforce that every gate is implemented.
Constraints~\eqref{eq:ydef} demand that a gate
be implemented along an arc if and only if logical qubits
are located at the physical qubits of the arc.
Finally, Constraints~\eqref{eq:bij1} and~\eqref{eq:bij2}
establish bijective mappings between logical and physical qubits
while Constraints~\eqref{eq:binw},\eqref{eq:binx},\eqref{eq:biny}
define the domains of the variables.

Model~\textcolor{blue}{\eqref{eq:Objective}-\eqref{eq:biny}} can be illustrated
by a time-expanded hardware graph~$G$,
which contains a copy of the nodes in~$H$
for every time step
as well as edges connecting subsequent time steps.
An example is shown in Figure~\ref{fig:TimeExpandedGraph}.
\textcolor{blue}{A feasible solution is represented by a collection of $|Q|$ many vertex-disjoint paths in $G$,
	where each path starts at time step $t=1$ and ends at time step $t=L$.
	Furthermore, in every time step $t$, neighboring restrictions
	implied by the gate group $G^t$ need to be satisfied by the paths.}
\begin{figure}
	\centering
	\includegraphics[width=0.45\linewidth]{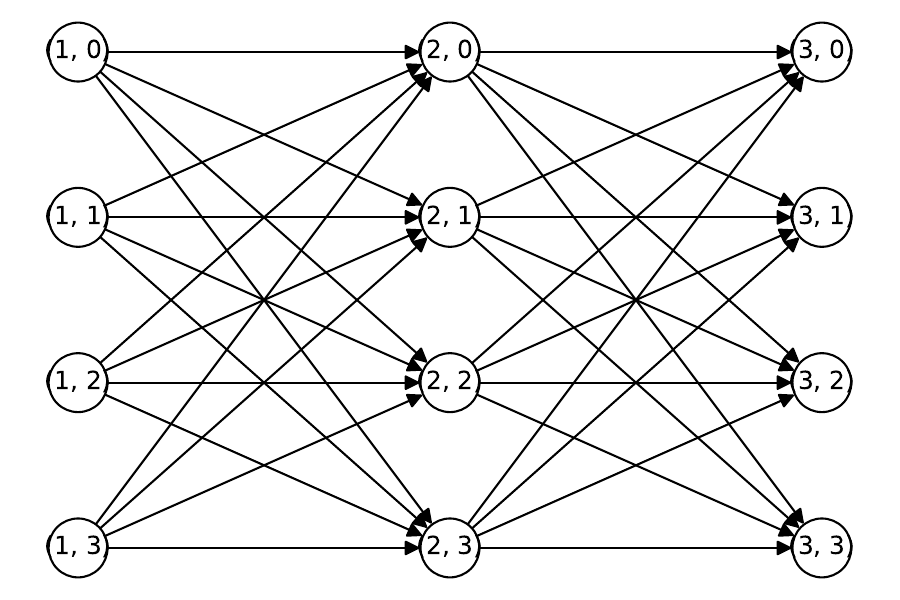}
	\caption{Example for the time-expanded hardware graph~$G$
		on which the network flow formulation of the TAP
		is based.
		The instance is taken
		from the example in Figure~\ref{fig:example}.
		Accordingly, there are $ \card{Q} = 4 $ logical qubits
		and $ L = 3 $ time steps.
		The first number in a node labels the time step
		while the second number labels the physical qubit.}
	\label{fig:TimeExpandedGraph}
\end{figure}

For our further analysis of the model,
we linearize \textcolor{blue}{the quadratic} Constraints~\eqref{eq:ydef}
by replacing them with
\begin{align}
	\label{eq:mccormick}
	y^t_{(p, q), (i, j)} &\leq w_{p, i}^t\notag\\
	y^t_{(p, q), (i, j)} &\leq w_{q, j}^t\\
	y^t_{(p, q), (i, j)} &\geq w_{p, i}^t + w_{q, j}^t - 1.
\end{align}
\textcolor{blue}{From now on, we only consider the linearized version of  \eqref{eq:Objective}-\eqref{eq:biny}
	 and refer to it as Model \eqref{eq:edgemodel}}.
\paragraph{Optimal solution to the linear relaxation.}

To illustrate the need for the introduction
of strengthening inequalities,
we show that the \textcolor{blue}{linear programming (LP)} relaxation
of Model~\eqref{eq:edgemodel} is very weak
in the sense that for \emph{any} TAP instance
there is an optimal LP solution with value~$0$.
To see this, we assign the flow value $ 1 / \card{Q} $
to all paths in~$G$ with zero costs.
More formally, we set
\begin{align*}
	x^t_{q, i, j} &= \begin{cases}
		1/|Q|, & i = j\\
		0, & i \neq j
		\end{cases}
		\qquad \forall i, j \in V,\,
			\forall q \in Q,\,
			\forall 1 \leq t \leq L - 1,\\
	w^t_{q, i} &= 1 / \card{Q}
		\quad \forall i \in V,\,
		\forall q \in Q,\,
		\forall 1 \leq t \leq L\\
	\intertext{and}
	y^t_{(p, q), (i, j)} &= 1 / \card{A_H}
		\quad \forall 1 \leq t \leq L,\,
			\forall (p, q) \in G^t,\,
			\forall (i, j) \in A_H.
\end{align*}
It can be checked that this fully symmetrical, fractional solution
is valid for the LP relaxation of Model~\eqref{eq:edgemodel}.
Concerning Constraints~\eqref{eq:mccormick},
note that $ \card{A_H} \geq 2(\card{Q} - 1) \geq \card{Q} $
and $ 2 / \card{Q} - 1 \leq 0 $ hold,
since~$H$ is connected and we have $ \card{Q} \geq 2 $.

\subsection{Subgraph Isomorphism Inequalities}
\label{sec:subgi}

We will now derive valid inequalities for Model~\eqref{eq:edgemodel}
to strengthen its LP relaxation.
To this end, we extend the result of Lemma~\ref{lem:sir},
which we also used in the proof of Theorem~\ref{thm:np}.
For a given TAP instance $ (H, Q, \Gamma) $,
this will enable us to identify a qubit subset $ \tilde{Q}\subseteq Q $,
an integer distance $ d \geq 0 $ in $H$
as well as time steps~$ t_0 $ and~$ t_1 $,
between which at least one qubit pair in~$\tilde{Q}$
has to move a total distance of at least~$ d + 1 $.
For the derivation we need
\begin{definition}[$d$-th relaxed graph]
	\label{def:hd}
	For an undirected, connected graph $ H = (V, E) $
	and an integer $ d \geq 0 $,
	we define the \emph{$d$-th relaxed graph} $ H^d = (V, E^d) $
	as the graph defined by
	\begin{align*}
		\set{i, j} \in E^d \iff d_H(i, j) \leq d + 1\ . %, \quad \forall i\neq j \in V\ .
	\end{align*}
\end{definition}
Note that $ H^0 = H $.
An example for Definition~\ref{def:hd}
is shown in Figure~\ref{fig:Hd}.
\begin{figure}[]
	\centering
	\subfloat[$ H^0 = H $]{
		\includegraphics[height=1.5cm]{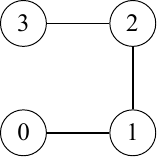}
		\label{fig:example_h0}
	}
	\qquad
	\subfloat[$ H^1 $]{
		\includegraphics[height=1.5cm]{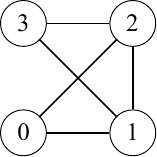}
		\label{fig:example_h1}
	}
	\qquad
	\subfloat[$ H^2 $]{
		\includegraphics[height=1.5cm]{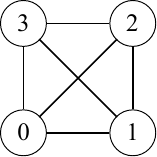}
		\label{fig:example_h2}
	}
	\caption{Example for the $d$-th relaxed graphs $ H^d $
		from Definition~\ref{def:hd}.
		The graph~$H$ is the four-qubit line graph
		from the example in Figure~\ref{fig:example_hg}.}
	\label{fig:Hd}
\end{figure}
We are now able to sketch the main idea
for deriving valid inequalities.
Choose a starting time~$ t_0 $ and an end time~$ t_1 $
with $ 1 \leq t_0 < t_1 \leq L $
as well as a subset~$ \tilde{G} $
of the gates between~$ t_0 $ and~$ t_1 $.
Now, consider the connectivity graph $ C(\tilde{G}) = (\tilde{Q}, E_C) $.
We will relate it to the $d$-th relaxed graph~$ H^d $ of~$H$
for some integer $ d \geq 0 $.
Namely, if there is no isomorphism from $ C(\tilde{G}) $ to some edge-induced subgraph 
of~$ H^d $, we know that there is no allocation
such that all qubits involved in the gates of~$ \tilde{G} $
are at most an edge distance of $ d + 1 $ apart.
Consequently, there has to be at least one gate in $ \tilde{G} $
such that the involved qubits are more than~$ d + 1 $ apart at~$ t_0 $.
Between time steps~$ t_0 $ and~$ t_1 $,
this qubit pair will move a total distance of at least~$ d+1 $.
%Furthermore, qubit movement is a permutation on the qubit set $Q$.
%This implies that the distance travelled by all qubits
%amounts to at least $2(d+1)$.
This leads us to the following
\begin{lemma}[Subgraph isomorphism inequalities]
	\label{lem:sgi}
	Let $ (H, Q, \Gamma) $ be a TAP instance with $ H = (V, E) $
	and $ \Gamma = G^1, \ldots, G^L $.
	Further,
	let \smash{$ \tilde{G} \coloneqq \bigcup_{t = 1}^L \tilde{G}^t $},
	$ \tilde{G}^t \subseteq G^t $,
	be a set of gates
	and let $ C \coloneqq C(\tilde{G}) = (\tilde{Q}, E_C) $
	be the corresponding connectivity graph.
	Finally, let $ d \geq 0 $ be an integer
	and let~$ H^d $ be the $d$-th relaxed graph of~$H$.
	Now consider the time steps
	$ t_0 \coloneqq \min\set{t \mid \tilde{G}^t \neq \emptyset} $
	and $ t_1 \coloneqq \max\set{t \mid \tilde{G}^t \neq \emptyset} $.
	If there is no edge-induced subgraph of~$ H^d $ isomorphic to~$C$,
	then the inequalities
	\begin{align}
		\sum_{t = t_0}^{t_1 - 1} \sum_{q \in \tilde{Q}}
			\sum_{(i, j) \in V \times V} d_H(i, j) x^t_{q, i, j}
				&\geq d + 1
		\label{eq:sgi1}
		\intertext{and}
		\sum_{t = t_0}^{t_1 - 1} \sum_{q \in Q}
			\sum_{(i, j) \in V \times V} d_H(i, j)x^t_{q, i, j}
				&\geq 2(d + 1)
		\label{eq:sgi2}
	\end{align}
	hold.
\end{lemma}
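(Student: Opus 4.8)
The plan is to argue by contradiction: suppose the inequality \eqref{eq:sgi1} fails, i.e.\ in some feasible solution the total distance moved by tokens of $\tilde Q$ between $t_0$ and $t_1$ is at most $d$. First I would observe that, since each summand $d_H(i,j)x^t_{q,i,j}$ is nonnegative and the quantity $\sum_{t=t_0}^{t_1-1}\sum_{q\in\tilde Q}\sum_{i,j}d_H(i,j)x^t_{q,i,j}$ is integral (it is a sum of edge-distances, which are integers, over an integral feasible solution), the assumption ``$< d+1$'' is equivalent to ``$\le d$''. This is the bound on the \emph{sum over all pairs}; I then want to conclude that \emph{every} token in $\tilde Q$ moves a total distance at most $d$ between $t_0$ and $t_1$. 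That follows because the distances of distinct tokens are separate nonnegative summands of the same total, so no single one can exceed $d$.

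Next I would use the triangle inequality for $d_H$ along the trajectory of each token. Fix $q\in\tilde Q$; between $t_0$ and $t_1$ its position traces out vertices $a_{t_0}(q),a_{t_0+1}(q),\ldots,a_{t_1}(q)$, and the flow variables $x^t_{q,i,j}$ record exactly the steps $a_t(q)\to a_{t+1}(q)$. By the triangle inequality,
\begin{equation*}
d_H\bigl(a_{t_0}(q),a_{t_1}(q)\bigr)\;\le\;\sum_{t=t_0}^{t_1-1} d_H\bigl(a_t(q),a_{t+1}(q)\bigr)\;=\;\sum_{t=t_0}^{t_1-1}\sum_{(i,j)\in V\times V} d_H(i,j)\,x^t_{q,i,j}\;\le\;d.
\end{equation*}
So in the allocation $a_{t_1}$ every pair of tokens that are within distance $1$ in $a_{t_0}$ has moved to a pair of vertices whose mutual distance is at most $1+2d$ — but I only need the weaker statement applied to the endpoints of each gate. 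Concretely: take any gate $(p,q)\in\tilde G^{t}$ for some $t\in\{t_0,\ldots,t_1\}$. I would apply the triangle-inequality bound on the subinterval $[t_0,t]$ (for $p$ and for $q$ separately) together with the feasibility condition $\{a_t(p),a_t(q)\}\in E$, i.e.\ $d_H(a_t(p),a_t(q))=1$, to get $d_H\bigl(a_{t_0}(p),a_{t_0}(q)\bigr)\le d_H(a_{t_0}(p),a_t(p))+1+d_H(a_t(q),a_{t_0}(q))\le d+1+d$. Hmm — this gives $2d+1$, not $d+1$, which would only relate $C(\tilde G)$ to $H^{2d}$, not $H^d$. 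The cleaner route is instead to look at the allocation $a_{t_0}$ directly: I claim that for every gate $(p,q)\in\tilde G$ we have $d_H(a_{t_0}(p),a_{t_0}(q))\le d+1$. Indeed $q$ and $p$ are adjacent at their gate-time $t\le t_1$, and each has moved at most $d$ total from $t_0$, but more care is needed — the right statement is that each moves at most $d$ on $[t_0,t_1]$, hence at most $d$ on $[t_0,t]$, giving the $2d+1$ bound. I expect the resolution is that the definition of ``distance moved'' in \eqref{eq:sgi1} is a \emph{sum over the whole window}, and one should instead bound the displacement of each token \emph{from $a_{t_0}$ to the gate time} and use that one of $p,q$ can be held fixed — or, most likely, the intended argument fixes $a_{t_0}$ as the candidate embedding and shows the gate endpoints are within $d+1$ by a one-sided movement argument. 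I would sort out this constant by re-reading Definition~\ref{def:hd} and the sketch preceding the lemma, where the informal claim is ``move a total distance of at least $d+1$''.

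Granting the claim that $a_{t_0}$ maps $\tilde Q$ into $V$ so that each edge of $C=C(\tilde G)$ has endpoints within $d_H$-distance $d+1$ — i.e.\ within a single edge of $H^d$ by Definition~\ref{def:hd} — the map $a_{t_0}|_{\tilde Q}$ is an injection witnessing that $C$ is isomorphic to an edge-induced subgraph of $H^d$ (each edge $\{p,q\}\in E_C$ maps to the edge $\{a_{t_0}(p),a_{t_0}(q)\}\in E^d$). This contradicts the hypothesis that no such subgraph exists, proving \eqref{eq:sgi1}. For \eqref{eq:sgi2} I would note that the total distance moved by \emph{all} tokens is at least the total moved by tokens in $\tilde Q$, but that only yields the factor-$1$ bound; the factor $2$ comes from the same observation used to justify the TAP cost function — a swap moves two tokens, so along any feasible swap-realization the total displacement $\sum_{q\in Q}\sum_t d_H(a_t(q),a_{t+1}(q))$ is even at each step, hence the integral quantity in \eqref{eq:sgi2}, being $\ge d+1$ and even, is $\ge 2\lceil (d+1)/2\rceil$; to reach exactly $2(d+1)$ one argues that if some token pair must close a gap of $d+1$, the partner token it displaces contributes symmetrically, doubling the count. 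The main obstacle, as flagged above, is pinning down the exact constant in the displacement-to-distance comparison so that it matches $H^d$ rather than $H^{2d}$; everything else (integrality, triangle inequality, reading off the embedding) is routine.
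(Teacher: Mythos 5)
Your contrapositive framing of \eqref{eq:sgi1} is sound in outline, but the step you flag as unresolved is a genuine gap with a clean fix you did not find: do not weaken the hypothesis to ``each token of $\tilde Q$ moves at most $d$''. Keep the joint bound. If the left-hand side of \eqref{eq:sgi1} were at most $d$, then for any gate $(p,q)\in\tilde G^{t'}$ the \emph{sum} of the distances travelled by $p$ and by $q$ over $[t_0,t']$ is at most $d$, because both are nonnegative summands of the same total; the triangle inequality then gives
\[
d_H\bigl(a_{t_0}(p),a_{t_0}(q)\bigr)\;\le\; d_H\bigl(a_{t_0}(p),a_{t'}(p)\bigr)+d_H\bigl(a_{t'}(p),a_{t'}(q)\bigr)+d_H\bigl(a_{t'}(q),a_{t_0}(q)\bigr)\;\le\; d+1,
\]
since the middle term equals $1$ by feasibility. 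That is exactly the $H^d$ (not $H^{2d}$) statement you needed; applying the individual bound to $p$ and to $q$ separately is what cost you the factor. The paper runs the same computation in the direct (contrapositive) direction: from the failure of $a_{t_0}$ to be an isomorphism onto an edge-induced subgraph of $H^d$ it extracts a single gate edge $\{q_1,q_2\}$ with $d_H(a_{t_0}(q_1),a_{t_0}(q_2))>d+1$ and concludes that these two tokens together travel at least $d+1$ before their gate time $t'$.

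The more serious gap is \eqref{eq:sgi2}, where your argument does not go through. The parity claim is false: the total displacement $\sum_{q\in Q}d_H(a_t(q),a_{t+1}(q))$ between consecutive allocations need not be even (on a triangle graph, cyclically permuting the three tokens gives total displacement $3$), the TAP model contains no swap variables whose evenness you could invoke, and in any case $2\lceil(d+1)/2\rceil<2(d+1)$ for every $d\ge 1$. The paper's argument is of a different nature and is essential: consider the permutation $\pi=a_{t_0}^{-1}\circ a_{t'}$ and the cycles $K_1,K_2$ of $\pi$ containing $q_1$ and $q_2$. The displacements of the tokens in a cycle trace out a closed tour through the vertices $a_{t_0}(q)$, $q\in K_i$, so by the triangle inequality the total displacement $L(K_i)$ of a cycle is at least twice the displacement of any single member and at least twice the initial distance between any two of its members. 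If $K_1\ne K_2$, disjointness gives $L(\tilde Q)\ge L(K_1)+L(K_2)\ge 2\bigl(d_H(a_{t_0}(q_1),a_{t'}(q_1))+d_H(a_{t_0}(q_2),a_{t'}(q_2))\bigr)\ge 2(d+1)$; if $K_1=K_2=K$, then $L(\tilde Q)\ge L(K)\ge 2\,d_H(a_{t_0}(q_1),a_{t_0}(q_2))\ge 2(d+1)$. Without some version of this cycle argument your proof establishes only the factor-$1$ bound of \eqref{eq:sgi1} for the larger sum.
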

\begin{proof}
	First, we show the validity of~\eqref{eq:sgi1}.
	For a feasible binary solution to Model~\eqref{eq:edgemodel},
	consider the allocation~$ a_{t_0} $ of qubits at time step~$ t_0 $.
	From $ a_{t_0} $, construct the edge-induced subgraph
	$ H' = (V', E') $
	of $ H^d = (V, E^d) $
	defined by the edge subset
	\begin{align}\label{eq:eprime}
		E' \coloneqq \left\set{\set{i, j} \in E^d \mid
			a_{t_0}^{-1}(i), a_{t_0}^{-1}(j) \in \tilde{Q}
				\wedge \set{a_{t_0}^{-1}(i), a_{t_0}^{-1}(j)}
					\in E_C\right}.
	\end{align}
	The mapping defined by the allocation~$ a_{t_0} $
	is not an isomorphism from~$ H' $ to~$C$,
	since~$C$ is not subgraph isomorphic to~$ H^d $ by assumption.
	Furthermore, the definition \eqref{eq:eprime} of~$E'$ implies,
	that for every edge $ \set{i, j} $ in~$ H' $
	there is a corresponding edge
	$ \set{a_{t_0}^{-1}(i), a_{t_0}^{-1}(j)} $ in~$C$.
	This means that there is an edge $ \set{q_1, q_2} $ in~$C$
	such that $ \set{a_{t_0}(q_1), a_{t_0}(q_2)} $
	is not an edge in~$ H' $.
	\textcolor{blue}{Thus, $ \set{a_{t_0}(q_1), a_{t_0}(q_2)} $ is also not an edge in $H^d$} by equation \eqref{eq:eprime}.
	This edge corresponds to a gate $ (q_1, q_2) \in G^{t'} $
	for some~$ t' $ with $ t_0 < t' \leq t_1 $.
	For this gate, the logical qubits~$ q_1 $ and~$ q_2 $
	are located at physical qubits more than~$ d + 1 $ apart at~$ t_0 $,
	\ie
	\begin{equation}
		\label{eq:distq1q2t0}
		d_H(a_{t_0}(q_1), a_{t_0}(q_2)) > d + 1,
	\end{equation}
	but need to be at neighbouring physical qubits at~$ t' $.
	Thus, together they need to move a distance of at least $ d + 1 $:
	\begin{equation}
		\label{eq:dist1q2traveled}
		d_H(a_{t_0}(q_1), a_{t'}(q_1)) + d_H(a_{t_0}(q_2), a_{t'}(q_2)
			\geq d + 1.
	\end{equation}
	This implies~\eqref{eq:sgi1}.
	Now, we show the validity of~\eqref{eq:sgi2}.
	For a subset of vertices $ \tilde{Q} \subseteq Q $ let
	\begin{equation}
		L(\tilde{Q}) \coloneqq \sum_{q \in \tilde{Q}}
			d_H\left(a_{t_0}(q), a_{t'}(q)\right)
	\end{equation}
	\textcolor{blue}{denote the sum of distances between
	qubit allocations at $t_0$ and qubit allocations at $t'$
	for logical qubits in~$ \tilde{Q} $.}
	Then, for the left-hand side of~\eqref{eq:sgi1} it holds
	\begin{equation}
		\sum_{t = t_0}^{t_1 - 1} \sum_{q \in \tilde{Q}}
			\sum_{(i, j) \in V \times V} d_H(i, j) x^t_{q, i, j}
				\geq L(\tilde{Q}),
	\end{equation}
	since $ t' \leq t_1 $.
	Consider the $Q$-permutation
	$ \pi \coloneqq a_{t_0}^{-1} \circ a_{t'}\colon Q \rightarrow Q $.
	Let $ K_1, K_2 \subseteq Q $ be the cycles in~$\pi$
	containing~$ q_1 $ and~$ q_2 $, respectively.
	First, consider the case~$ K_1 \neq K_2 $.
	Here it holds
	\begin{equation}
		L(\tilde{Q}) \geq L(K_1) + L(K_2)\ ,
	\end{equation}
	since $ K_1 \cap K_2 = \emptyset $.
	Furthermore, we know $ L(K_i) \geq 2d_H(a_{t_0}(q_i), a_{t'}(q_i)) $,
	$ i \in \set{1, 2} $,
	since~$ K_1 $ and~$ K_2 $ are cycles
	and $ d_H(\cdot, \cdot) $ satisfies the triangle inequality,
	see Figure~\ref{fig:Keq} for an illustration.
	\begin{figure}[]
		\centering
		\subfloat[$ K_1 \neq K_2 $]{
			\includegraphics[height=2cm]{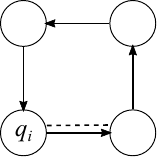}
			\label{fig:Keq}
		}
		\qquad\qquad
		\subfloat[$ K_1 = K_2 $]{
			\includegraphics[height=2cm]{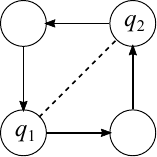}
			\label{fig:Kneq}
		}
		\caption{Sketch for the two cases
			in the proof of Lemma~\ref{lem:sgi}.
			The cycle in~(a) contains
			exactly one of the qubits~$ q_1 $ and~$ q_2 $
			and its length is greater or equal
			than twice the distance this qubit moves (dashed line).
			The cycle in~(b) contains both qubits~$ q_1 $ and~$ q_2 $,
			and its length is greater or equal
			than twice the initial distance of~$ q_1 $ and~$ q_2 $
			(dashed line).}
		\label{fig:K}
	\end{figure}
	Thus, we have
	\begin{equation}
		L(K_1) + L(K_2) \geq 2d_H(a_{t_0}(q_1), a_{t'}(q_1))
			+ 2d_H(a_{t_0}(q_2), a_{t'}(q_2))
			\overset{\eqref{eq:dist1q2traveled}}{\geq} 2(d + 1).
	\end{equation}
	On the other hand, if $ K_1 = K_2 \eqqcolon K $, we find
	\begin{equation}
		L(\tilde{Q}) \geq L(K).
	\end{equation}
	Thus, again using the triangle inequality
	and the fact that~$K$ is a cycle,
	we have
	\begin{equation}
		L(K) \geq 2d_H(a_{t_0}(q_1), a_{t_0}(q_2))
			\overset{\eqref{eq:distq1q2t0}}{\geq} 2(d + 1)
	\end{equation}
	as illustrated in Figure~\ref{fig:Kneq}.
	This implies~\eqref{eq:sgi2}.
\end{proof}
Clearly, any valid subgraph isomorphism inequality
cuts off all solutions of the LP relaxation with zero cost.
However, generating these inequalities can be computationally expensive,
since subgraph isomorphism is NP-complete.
%Additionally, the set of possible cut candidates is given by all gate subsets and is thus exponentially large.

Having shown its NP-hardness and analysed its binary model,
we conclude this section about the TAP
by introducing methods to reduce the model size for larger instances.

\subsection{Algorithmic Enhancements for Large Instances}\label{sec:practical}

To keep the size of Model~\eqref{eq:edgemodel} tractable
for larger hardware graphs and quantum circuits,
we describe two heuristic methods for eliminating variables,
which means that both methods
may in principle lead to suboptimal solutions.
%However, \textcolor{blue}{we typically observed} no or only marginal suboptimality
%together with a significant reduction in computation time.
The number of flow variables is decreased
by either removing edges from the time-expanded graph~$G$
or by removing logical qubits not participating in two-qubit gates.

\paragraph{Limiting Distance.}

The first method eliminates edges from~$G$.
We only consider edges which connect physical qubits in~$H$
that are within a given distance limit.
Thus, we limit the distance that a logical qubit may move
between subsequent allocations.
\textcolor{blue}{A method to find a reasonable distance limit is to iteratively increase the limit until the TAP becomes feasible.
Once the model is feasible, the distance is again increased by one and the TAP is solved a last time.}

\paragraph{Limiting the Number of Qubits.}

In practice, the number of logical qubits is usually smaller
than the number of physical qubits.
In principle, this can be reduced to the case
of an equal number of logical and physical qubits
by adding inactive logical qubits not participating
in any two-qubit gates.
However, the number of variables in the TAP model
grows linearly with $ \card{Q} $.
Therefore, we only consider flow variables associated with active logical qubits.
This bears the problem
that costs associated with the flow of inactive qubits
are not counted.
To circumvent this issue, we do not allow active logical qubits
to change positions with inactive qubits.
Thus, the subgraph of~$H$ at which active logical qubits are located
is fixed for all time steps.
\textcolor{blue}{However, we stress that the particular choice of this subgraph remains subject to optimization.}

	\section{Solving the Token Swapping Problems}\label{sec:ts}

We now turn to the solution of the token swapping problems
which need to be solved as the second step
in the proposed routing procedure,
once an optimal TAP solution has been found.
Token swapping is known to be NP-hard (see \eg \cite{miltzow2016}).
We will develop two solution methods
for the token swapping problem:
an efficient approximation algorithm
as well as an exact branch-and-bound approach.

\subsection{Approximate Solution}\label{sec:miltzow}

The proposed qubit routing procedure employs a modified version of the
efficient token swapping algorithm given by Miltzow et al.\ in \cite{miltzow2016}.
\textcolor{blue}{This approximation algorithm has running
  time that is bounded by a polynomial in the number of tokens, also in the worst case. %known to return a
  %result within a time that is bounded by a complexity is polynomial
  %in the number of tokens.
}
It has a guaranteed approximation factor of four in terms of swap count.
However, we note that on all instances we considered,
the algorithm exhibits an approximation factor better than 1.5.
Indeed, the analysis in \cite{miltzow2016} shows
that the approximation factor is strictly less than four. 

We briefly review the original approximation algorithm.
For a detailed description and analysis,
we refer the interested reader to \cite{miltzow2016}.
Given a token swapping instance $ (H, Q, a, a') $,
the algorithm iteratively performs swaps
by gradually changing the allocation from $a$ to $ a' $.
We adapt the notion of \cite{miltzow2016}
for an arbitrary token allocation~$ \tilde{a} $:
an \emph{unsatisfied vertex} is a vertex $ i \in H $
which does not hold its target token,
\ie $ \tilde{a}^{-1}(i) \neq (a')^{-1}(i) $.
A \emph{distance-decreasing neighbour} of a vertex $ i \in H $
is a neighbour of $i$
from which the distance to the target vertex
of the token sitting at $i$
is strictly less than the distance from $i$.
A \emph{happy swap chain} of length $k$ is a sequence of $k$ swaps
$ (i_1, i_2), (i_2, i_3), \ldots, (i_{k - 1}, i_k), (i_k, i_{k + 1})
	\subseteq \TT_H $
such that every token involved
has its distance decreased
after the swaps in the chain have been performed sequentially.
An \emph{unhappy swap} is a swap that moves a
token already positioned 
at its target vertex, \textcolor{blue}{while the other token moves closer to its target}.
Given these notions, the algorithm works as follows:
\begin{enumerate}
	\item Choose an unsatisfied vertex.
	\item Perform a walk along distance-decreasing neighbours until a cycle is closed or a dead end encountered, \ie no distance-decreasing neighbour exists.
	\item Perform the happy swap chain or unhappy swap.
	\item Go to 1.
\end{enumerate}
%\textcolor{blue}{Step 2 can be interpreted as a walk on a directed graph.}
The key insight why this algorithm terminates is,
that vertices involved in an unhappy swap
will be part of a future happy swap chain.
Furthermore, the number of happy swap chains is finite since they necessarily move tokens further towards their targets.
\textcolor{blue}{The original algorithm,
as described in \cite{miltzow2016}, chooses random vertices in steps 1 and 2.}
We modify the algorithm in two ways aiming to further reduce both resulting swap count and depth.
First, in step 1, we choose an unsatisfied vertex
that was not part of the previously performed swap chain or unhappy swap.
With this choice, we try to reduce depth:
only swaps on disjoint vertices can be performed in parallel.
Second, we modify step 2.
\textcolor{blue}{In each step of the walk}
we try to choose cleverly
among all distance-decreasing neighbours
by exploring them first.
If possible, we choose
the one closing the smallest cycle.
The motivation for this is, that all swaps in a
happy swap chain cannot be performed in parallel since they share common vertices.
If no cycle can be closed,
we try to avoid dead ends, \ie unhappy swaps.
Intuitively, a typical optimal
solution does not use many unhappy swaps.

Having introduced an efficient but approximate method, we now derive an exact branch and bound algorithm.

\subsection{Exact Solution}
The exact approach will be used for benchmarking
the approximation algorithm presented in the previous section.
The exact method finds a shortest path from the initial allocation
to the final allocation in the associated \emph{Cayleigh graph}.
The latter possesses a node for each of the $ \card{Q}! $ possible allocations.
Two nodes are adjacent if and only if there is a swap,
\ie a transposition $ \tau \in \mathcal{T}_H $,
which maps between the associated allocations.
It follows that an optimal solution to a token swapping problem
is given by a shortest path from the initial allocation
to the final allocation in the Cayleigh graph.
The search algorithm works through the nodes of the Cayleigh graph starting at the initial allocation.
At each node, we give upper and lower bounds on the shortest path length
to the goal node.
%This type of search algorithm is sometimes referred to as \emph{A* search with admissible heuristic}.
Upper bounding is achieved by running the modified 4-approximation algorithm
from Section~\ref{sec:miltzow},
yielding a path from the current node to the goal node.
This path is augmented by the path from the start node to the current node,
giving a feasible solution to the token swapping problem.

We use two different types of lower bounds, described in the following sections.

\subsubsection{Improved Distance Lower Bound}

A well-known lower bound on token swapping,
already introduced as objective of the TAP in Section \ref{sec:alloc},
is given by
\textcolor{blue}{
\begin{lemma}[\cite{YAMANAKA2015,miltzow2016} Distance lower bound]
	\label{lem:halfdistlb}
	For any token swapping instance {$ T = (H, Q, a, a') $} it holds
	\begin{align}\label{eq:halfdistlb}
		\mathrm{OPT}(T) \geq \ceil*{\frac{1}{2}\sum_{q \in Q} d_H(a(q), a'(q))}.
	\end{align}
\end{lemma}
%It is based on the insight that a single swap can reduce the sum of all distances by at most two.
The following Lemma describes a class of instances for which this is bound sharp,\ie satisfied with equality.
\begin{lemma}[]
	\label{lem:lbsharpness}
	The distance lower bound \eqref{eq:halfdistlb} is sharp on all token swapping instances that require two or less swaps.
\end{lemma}
\begin{proof}
	It is easy to see that the bound is sharp for an instance that requires one or less swaps.
	Now, consider an instance that requires two swaps.
	Here, exactly a two cases can occur. Either the two swaps act on disjoint vertices, or they share a common vertex.
	In the first case, the sum in \eqref{eq:halfdistlb} amounts to 4, where in the second case it amounts to at least 3.
\end{proof}
A simple example for a token swapping instance requiring two swaps is given by a 3-vertex line graph where initially no vertex holds its desired token.
On the contrary, for every optimal objective value greater or equal to 3,
one can construct token swapping instances,
for which the distance bound is not sharp.
On a complete (sub-)graph, place $n$ tokens such that the underlying permutation has exactly one cycle.
Then, an optimal solution has $n-1$ swaps as we will see in the following section.
However, the distance bound is $\ceil*{n/2} < n-1$ for $\geq 3$.
}

\textcolor{blue}{
Lemma~\ref{lem:lbsharpness} implies that the decomposition
of the qubit routing problem into TAP and token swapping is guaranteed
to return optimal solutions
for all instances that have a solution with two or less swaps.
}
Our aim is now to improve upon lower bound~\eqref{eq:halfdistlb}.
To this end, we introduce the notion of a \emph{blocking vertex}.
\begin{definition}[Blocking vertex]
	\label{def:blocking}
	Let $ (H, Q, a, a') $ be a token swapping instance.
	Let $ q \in Q $ be a token not yet sitting at its target vertex,
	\ie $ a(q) \neq a'(q) $.
	A vertex $ v \in H $ is called \emph{$q$-blocking vertex}
	if it is contained in a shortest path in $H$ from $ a(q) $ to $ a'(q) $,
	and if it holds its desired token
	\ie $ a^{-1}(v) = (a')^{-1}(v) $.
\end{definition}
Intuitively, the presence of blocking vertices
increases the number of required swaps.
This intuition is confirmed by
\begin{lemma}[Improved distance lower bound]
	\label{lem:distlb}
	Let \textcolor{blue}{$ T = (H, Q, a, a') $} be a token swapping instance.
	Let $ Q^* \subseteq Q $ denote the set of tokens
	not yet sitting at their target vertices.
	For every $ q \in Q^* $
	let $ B_q \subset V $ be a set of $q$-blocking vertices
	(see Definition~\ref{def:blocking}) of size $ k_q $
	such that
	\begin{enumerate}
		\item[(i)] any path in $H$ from $q$ to its target vertex
			avoiding any of the blocking vertices in $ B_q $
			has at least length $ d_H(a(q), a'(q)) + 2k_q $ and
		\item[(ii)] the sets $ B_q $ are pairwise disjoint.
	\end{enumerate}
	Then it holds
	\begin{align}
		\mathrm{OPT}(\textcolor{blue}{T}) \geq \ceil*{\sum_{q \in Q} \frac{d_H(a(q), a'(q))}{2}}
			+ \sum_{q \in Q^*} k_q.
	\end{align}
\end{lemma}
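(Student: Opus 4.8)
The plan is to combine the standard distance lower bound with a careful accounting of the "extra" swaps forced by blocking vertices, and to make the argument rigorous via a potential/charging argument on an arbitrary optimal swap sequence. Let $\mathcal{S} = \tau_1, \ldots, \tau_N$ be any optimal solution, so $N = \mathrm{OPT}(I)$, and track the intermediate allocations $a = a^{(0)}, a^{(1)}, \ldots, a^{(N)} = a'$. For each token $q$, let $\ell(q)$ be the total number of swaps in $\mathcal{S}$ that move token $q$; since each swap moves exactly two tokens, $\sum_{q \in Q} \ell(q) = 2N$. The classical bound comes from $\ell(q) \geq d_H(a(q), a'(q))$, which gives $2N \geq \sum_q d_H(a(q),a'(q))$ and hence $N \geq \lceil \tfrac12 \sum_q d_H(a(q),a'(q)) \rceil$. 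The task is to sharpen this to gain an additional $\sum_{q \in Q^*} k_q$.

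First I would prove a single-token claim: for each $q \in Q^*$, the token $q$ is moved at least $d_H(a(q),a'(q)) + 2k_q$ times, i.e. $\ell(q) \geq d_H(a(q),a'(q)) + 2k_q$. The idea is that every blocking vertex $v \in B_q$ holds its correct token in both $a$ and $a'$; so along the swap sequence, token $q$ either never occupies $v$ — in which case, by hypothesis (i), the trajectory of $q$ is a walk avoiding all of $B_q$ and therefore has length at least $d_H(a(q),a'(q)) + 2k_q$ — or token $q$ does pass through some blocking vertices. In the latter case one argues locally: whenever $q$ enters a blocking vertex $v$, the correct token of $v$ is displaced and must eventually return, costing extra swaps; summing these local detours over the $k_q$ blocking vertices recovers the $+2k_q$ term. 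The cleanest route is probably to show that condition (i) in fact forces $\ell(q) \ge d_H(a(q),a'(q)) + 2k_q$ regardless of which case occurs, by observing that any actual trajectory of $q$ that touches a subset $B' \subseteq B_q$ can be "shortcut" past the vertices in $B' $ at a cost of at most $2|B'|$ fewer edges while staying a valid walk from $a(q)$ to $a'(q)$ avoiding $B_q$; comparing with (i) then yields the bound.

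Next, summing the single-token claim over $q \in Q^*$ and adding $\ell(q) \ge d_H(a(q),a'(q))$ for $q \notin Q^*$ (trivial, both sides zero) gives
\begin{align}
	2N = \sum_{q \in Q} \ell(q) \geq \sum_{q \in Q} d_H(a(q), a'(q)) + 2\sum_{q \in Q^*} k_q,
\end{align}
using disjointness (ii) so that no blocking vertex is double-counted across different $q$. Dividing by $2$ and using integrality of $N$,
\begin{align}
	N \geq \ceil*{\frac{1}{2}\sum_{q \in Q} d_H(a(q), a'(q)) + \sum_{q \in Q^*} k_q}
	 = \ceil*{\frac{1}{2}\sum_{q \in Q} d_H(a(q), a'(q))} + \sum_{q \in Q^*} k_q,
\end{align}
where the last equality holds because $\sum_{q \in Q^*} k_q$ is an integer. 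This is exactly the claimed inequality.

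The main obstacle is the single-token claim, and specifically making precise the "detour" argument when token $q$ genuinely passes through blocking vertices rather than avoiding them. One must be careful that hypothesis (i) only constrains paths that avoid $B_q$, so a proof that merely says "the trajectory avoids $B_q$, hence by (i) it is long" is incomplete; the trajectory need not avoid $B_q$. I expect the right fix is the shortcutting lemma sketched above: given the actual walk $W$ traced by $q$ from $a(q)$ to $a'(q)$, and the (possibly empty) set $B' = B_q \cap W$ of blocking vertices it visits, construct a walk $W'$ from $a(q)$ to $a'(q)$ avoiding all of $B_q$ with $\mathrm{length}(W') \le \mathrm{length}(W) + 2|B_q \setminus B'|$ — essentially rerouting around each unused blocking vertex at cost $\le 2$, using that blocking vertices lie on shortest $a(q)$–$a'(q)$ paths and $H$ is connected — and also $\mathrm{length}(W) \ge \mathrm{length}(W_{\text{restricted to avoid }B'}) $ plus the detours through $B'$. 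Balancing these with (i) gives $\ell(q) = \mathrm{length}(W) \ge d_H(a(q),a'(q)) + 2k_q$. Pinning down the exact combinatorial statement of this rerouting step, so that the constants work out to precisely $+2k_q$ and not something weaker, is the delicate part.
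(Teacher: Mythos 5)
There is a genuine gap: your central single-token claim, $\ell(q) \geq d_H(a(q),a'(q)) + 2k_q$ for every $q \in Q^*$, is false, and the shortcutting lemma you propose to establish it is also false. Consider the path graph $v_1 - v_2 - v_3$ with token $q$ at $v_1$ targeting $v_3$, token $p$ at $v_2$ already at its target, and token $r$ at $v_3$ targeting $v_1$. Take $B_q = \{v_2\}$, so $k_q = 1$; hypothesis (i) holds vacuously because no $v_1$--$v_3$ path avoids $v_2$, and (ii) holds with $B_r = \emptyset$. The optimal solution $(v_1,v_2), (v_2,v_3), (v_1,v_2)$ uses $3$ swaps and moves $q$ exactly $d_H(v_1,v_3) = 2$ times, not $4$. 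The lemma's conclusion is still correct ($3 \geq \lceil 4/2\rceil + 1$), but the extra swap is performed by the blocking token $p$ (which leaves its target and returns), not by $q$. Your accounting charges the entire surcharge $2k_q$ to $q$'s own trajectory length $\ell(q)$, and no rerouting argument can rescue this, because in this example the walk $W$ traced by $q$ already visits all of $B_q$ (so your bound $\mathrm{length}(W') \leq \mathrm{length}(W) + 2\lvert B_q \setminus B'\rvert$ would assert the existence of a $B_q$-avoiding walk of length $2$, which does not exist). Moreover, even when detours exist, rerouting around a vertex in a general graph is not bounded by cost $2$.

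The paper's proof avoids this trap by splitting $Q^*$ into the tokens whose actual trajectory passes through all of their blocking vertices ($Q_1^*$) and the rest ($Q_2^*$). For $Q_2^*$ it replaces $d_H$ by a restricted distance $d_{H,q}$ over paths missing at least one blocking vertex, which hypothesis (i) bounds below by $d_H + 2k_q$; for $Q_1^*$ it runs a potential-function argument on $d_i$ (the total remaining distance after $i$ swaps, each swap changing it by at most $2$) and shows that the first swap touching each visited blocking vertex fails to decrease the potential, since it ejects a satisfied token from its target. Those ``non-productive'' swaps, counted via the index sets $I_0$ and $I_2$, supply the missing $\sum_{q\in Q_1^*} k_q$. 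If you want to salvage your plan, you would need to replace the per-token bound on $\ell(q)$ by a charging scheme that also assigns swaps to the displaced blocking tokens — which is essentially what the paper's partition accomplishes.
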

\begin{figure}
	\centering
	\includegraphics[width=0.15\linewidth]{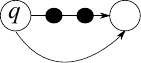}
	\caption{Sketch for the proof of Lemma~\ref{lem:distlb}.
		The shortest path from the vertex holding token $q$
		to its target vertex (horizontal arrow) is blocked by vertices
		already holding their desired tokens (black circles).
		In any solution, $q$ takes a detour (curved arrow)
		or satisfied tokens at blocking vertices are moved.
		Either alternative results in additional swaps.}
	\label{fig:blocking}
\end{figure}
\begin{proof}
	Consider a solution $ \mathcal{S} = \tau_1, \tau_2, \ldots, \tau_N $
	and denote $ [N] \coloneqq \set{1, \ldots, N} $.
	We partition $ Q^* = Q^*_1\, \dot{\cup}\, Q^*_2 $
	according to the following scheme:
	for $ q \in Q^* $ let $ q \in Q^*_1 $
	if the path of $q$ in $ \mathcal{S} $
	contains all associated blocking vertices $ B_q $;
	otherwise, let $ q \in Q^*_2 $.
	
	For $ q \in Q_2^* $ and $ v, v' \in V $
	let $ \mathcal{P}_q(v, v') $ denote the set of $v$-$v'$-paths in $H$
	which do not contain all of the blocking vertices $ B_q $.
	For a path $ p \in \mathcal{P}_q(v, v') $,
	define its length as $ l(p) \coloneqq \card{p} - 1 $.
	Moreover, define
	\begin{align}
		d_{H, q}(v, v') \coloneqq \min \set{l(p) \mid \in \mathcal{P}_q(v, v')}
	\end{align}
	as the minimum edge-distance between $v$ and $v'$
	if only paths not containing all $q$-blocking vertices are allowed.
	Note, that $ \mathcal{P}_q(a(q), a'(q)) \neq \emptyset $:
	from the definition of $ Q^*_2 $,
	it follows that for all $ q \in Q^*_2 $ there is a path not containing~$ B_q $.
	Furthermore, by property (i) of $ B_q $ it holds 
	\begin{equation}\label{eq:detour}
		d_{H, q}(a(q), a'(q)) \geq d_H(a(q), a'(q)) + 2k_q,
	\end{equation}
	for all $ q \in Q^*_2 $,
	see Figure~\ref{fig:blocking} for an illustration.
	
	For $ i \in [N] $, consider the sum $ d_i $
	of all edge distances from tokens \textcolor{blue}{in $Q^*$} to their targets
	after swap $ \tau_i $ has been performed,
	accounting for the detours taken by tokens in $ Q^*_2 $,
	\begin{align*}
		d_i \coloneqq &\sum_{q \in Q^*_1} d_H(\tau_i\circ \tau_{i - 1} \circ \ldots \circ \tau_1 \circ a(q), a'(q)) +\\& \sum_{q \in Q^*_2} d_{H, q}(\tau_i\circ \tau_{i - 1} \circ \dots \circ \tau_1 \circ a(q), a'(q)).
	\end{align*}
	Accordingly, set
	\begin{align}
		d_0 \coloneqq \sum_{q \in Q_1^*} d_H(a(q), a'(q))
			+ \sum_{q \in Q^*_2} d_{H, q}(a(q), a'(q)).
	\end{align}
	Using~\eqref{eq:detour} yields
	\begin{equation}
		\label{eq:d0}
		d_0 \geq \sum_{q \in Q^*} d_H(a(q), a'(q)) + \sum_{q \in Q^*_2} 2k_q
			= \underbrace{\sum_{q \in Q} d_H(a(q), a'(q))}_{\eqqcolon D_0}
				+ \sum_{q \in Q^*_2} 2k_q,
	\end{equation}
	where we used that $ d_H(a(q), a'(q)) = 0 $
	for all $ q \in Q \setminus Q^* $.
	Furthermore, for $ i \in [N] $ let $ \Delta d_i $ denote the difference
	\begin{align}
		\Delta d_i \coloneqq d_i - d_{i - 1}.
	\end{align}
	Clearly, $ d_N = 0 $,
	since $ \tau_N \circ \ldots \circ \tau_2 \circ \tau_1 \circ a = a' $.
	Thus,
	\begin{equation}
		\label{eq:sumdelta}
		\sum_{i \in [N]} \Delta d_i = d_N - d_0 = -d_0.
	\end{equation}
	Furthermore, for any $ i \in [N] $ it holds $ -2 \leq \Delta d_i \leq 2 $,
	since a swap moves two tokens each for a distance of~1.
	
	Now, we turn to $ Q_1^* $.
	By definition of $ Q_1^* $,
	a vertex $ v \in \bigcup_{q \in Q^*_1} B_q $
	is necessarily involved in a swap.
	For $ v \in  \bigcup_{q \in Q^*_1} B_q $,
	denote by $ i_v \coloneqq \min \set{i \in[N] \mid v \in \tau_i} $
	the first swap that involves $i$.
	Let $ I \coloneqq \set{i_v \mid v \in \bigcup_{q \in Q^*_1} B_q} $.
	Partition $ I = I_0 \,\dot{\cup}\, I_2 $ in the following way.
	For $ i_v \in I $, if there is an $ v' \in \bigcup_{q \in Q^*_1} B_q $
	such that $v'\neq v$ but $i_v = i_{v'} $,
	we let $ i_v $ belong to the set~$ I_2 $.
	Otherwise, we let $ i_v $ belong to the set $ I_0 $.
	Then, for $ i \in I_0 $ we have $ \Delta d_i \geq 0 $,
	since at least one token leaves its target vertex by swap $ \tau_i $, namely the token $a^{-1}(v)$.
	Furthermore, for $ i \in I_2 $ we have $ \Delta d_i = 2 $,
	since both tokens leave their target vertices by swap $ \tau_i $.
	
	Let $ n_0 \coloneqq \card{I_0} $ and $ n_2 \coloneqq \card{I_2} $.
	Using property (ii) of $ B_q $ and the definition of $ I_0 $ and $ I_2 $,
	it holds
	\begin{align}
		\card{\bigcup_{q \in Q^*_1} B_q} = \sum_{q \in Q^*_1} k_q = 2n_2 + n_0.
	\end{align}
	Let $ \bar{I} \coloneqq [N] \setminus{I} = [N] \setminus(I_0 \cup I_2) $.
	Then we can use~\eqref{eq:sumdelta} to obtain
	\begin{align}
		-d_0 = \sum_{i \in [N]} \Delta d_i
			&=\underbrace{\sum_{i \in I_0} \Delta d_i}_{\geq 0}
				+ \underbrace{\sum_{i \in I_2} \Delta d_i}_{= 2n_2}
				+ \sum_{i \in \bar{I}} \Delta d_i
			\geq 2n_2 + \sum_{i \in \bar{I}} \Delta d_i.
	\end{align}
	Thus, with \eqref{eq:d0}, we have
	\begin{align}
		\sum_{i \in \bar{I}} \Delta d_i \leq - D_0 - \sum_{q \in Q^*_2} 2k_q - 2n_2.
	\end{align}
	Since $ \Delta d_i \geq -2 $, it follows
	\begin{align}
		\card{\bar{I}} \geq \ceil*{\frac{D_0 + \sum_{q\in Q^*_2}2 k_q + 2n_2}{2}}
			= \ceil*{\frac{D_0}{2}} + \sum_{q\in Q^*_2} k_q + n_2.
	\end{align}
	Altogether, this means
	\begin{align*}
		N &= \card{I_0} + \card{I_2} + \card{\bar{I}}
			\geq n_0 + n_2 + \ceil*{\frac{D_0}{2}}
				+ \sum_{q\in Q^*_2} k_q + n_2\\
			&= \ceil*{\frac{D_0}{2}} + \sum_{q \in Q^*_2} k_q
				+ \sum_{q \in Q^*_1} k_q
			= \ceil*{\frac{D_0}{2}} + \sum_{q\in Q^*} k_q \ .
	\end{align*}
\end{proof}
\textcolor{blue}{
The question remains, how the sets of blocking vertices $B_q$ can be computed in practise.
We employ a simple greedy procedure that iterates over all tokens and takes as many blocking vertices as possible in each iteration.}
As it turns out computationally,
this lower bound is particularly strong on graphs with low connectivity.
% Intuitively, a blocked path on a sparse graph results in large detours.
% Current quantum devices often have sparse hardware graphs.
On the contrary, the second lower bound
employed by the branch-and-bound algorithm
is stronger on dense graphs.

\subsubsection{Complete Split Graph Lower Bound}

While the bound from Lemma~\ref{lem:distlb}
is based on relaxing the restriction,
that tokens need to be moved by swaps,
we now derive a lower bound
that is based on relaxing the set of allowed transpositions~$ \TT_H $.
This set is increased by adding edges missing in~$H$.
We use a result of Yasui et al.\ (\cite{Yasui2015SwappingLT}),
who give an efficient algorithm for solving token swapping problems
on \emph{complete split graphs}.
\begin{definition}[Complete split graph]
	A vertex subset of a graph is called \emph{independent}
	if no two vertices in the subset are connected by an edge.
	A vertex subset is called a \emph{clique}
	if any two vertices in the subset are connected by an edge.
	A graph is called a \emph{complete split graph}
	if it can be partitioned into an independent set and a clique
	such that every vertex in the independent set
	is connected to all vertices in the clique. 
\end{definition}
\begin{lemma}[Complete split graph lower bound \cite{Yasui2015SwappingLT}]
	\label{lem:competesplit}
	Let $ T = (H, Q, a, a') $ be a token swapping instance.
	Furthermore, let $ V' \subset V $ be an independent set in $H$.
	Let $ n = \card{V} $ be the number of vertices,
	let~$r$ be the number of cycles in the permutation $ a \circ (a')^{-1} $
	including fix points, \ie cycles of length 1.
	Let $q$ be the number of cycles in the permutation $ a \circ (a')^{-1} $
	excluding fix points that only involve vertices in $ V' $.
	Then it holds
	\begin{align}
		\mathrm{OPT}(T) \geq n - r + 2q.
	\end{align}
\end{lemma}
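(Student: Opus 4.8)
The plan is to reduce the token swapping instance on~$H$ to a token swapping instance on a complete split graph obtained by adding edges to~$H$, and then invoke the exact formula of Yasui et al. The first step is the monotonicity observation: if $H'=(V,E')$ is a supergraph of $H=(V,E)$ with $E\subseteq E'$, then $\TT_H\subseteq\TT_{H'}$, so every feasible sequence of swaps on $H$ is also feasible on $H'$. Consequently $\mathrm{OPT}(H',Q,a,a')\leq\mathrm{OPT}(H,Q,a,a')$, i.e.\ relaxing the edge set can only decrease the optimal swap count. This is the mechanism by which a lower bound on the relaxed graph yields a lower bound on the original.

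Next I would construct the relaxed graph. Given an independent set $V'\subseteq V$ in~$H$, let $W\coloneqq V\setminus V'$. Define $H'$ by making $W$ into a clique and connecting every vertex of $V'$ to every vertex of $W$, while keeping $V'$ independent. Then $H'$ is by definition a complete split graph with independent part $V'$ and clique part $W$. The key point to verify is that $H'$ really is a supergraph of~$H$: since $V'$ is independent in~$H$, every edge of~$H$ has at least one endpoint in $W$; edges with both endpoints in $W$ are present in $H'$ because $W$ is a clique, and edges with one endpoint in $V'$ and one in $W$ are present because we connected $V'$ completely to $W$. Hence $E\subseteq E'$ and the monotonicity step applies: $\mathrm{OPT}(I)\geq\mathrm{OPT}(H',Q,a,a')$.

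The final step is purely a matter of applying the cited result of Yasui et al.\ to the complete split graph $H'$. Their theorem gives the exact optimal value of token swapping on a complete split graph in closed form as $n-r+2q$, where $n=\card{V}$, $r$ is the number of cycles (including fixed points) of the permutation $a\circ(a')^{-1}$, and $q$ counts those non-trivial cycles of $a\circ(a')^{-1}$ entirely contained in the independent part~$V'$. Since these three quantities depend only on $a$, $a'$, and the chosen independent set $V'$ — not on the edge structure of~$H$ beyond the fact that $V'$ is independent — the formula evaluated for $H'$ equals $n-r+2q$ as stated in the lemma. Chaining the inequalities gives $\mathrm{OPT}(I)\geq\mathrm{OPT}(H',Q,a,a')=n-r+2q$, which is the claim.

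The main obstacle here is not conceptual but one of careful bookkeeping: one must be precise that $V'$ being independent in~$H$ is exactly what guarantees $H\subseteq H'$, and one must be careful to cite the Yasui et al.\ formula with the correct convention for which permutation ($a\circ(a')^{-1}$ versus its inverse) and which cycles are counted, since the values of $r$ and $q$ are sensitive to these choices. Beyond that, the proof is essentially a two-line argument — monotonicity under edge addition, followed by substitution into the known complete-split-graph formula — so there is no hard analytic core to grind through.
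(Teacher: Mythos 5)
Your proposal is correct and follows essentially the same route as the paper: construct the complete split graph $H'$ by completing $V\setminus V'$ into a clique and joining it fully to the independent set $V'$, observe that $E\subseteq E'$ gives $\TT_H\subseteq\TT_{H'}$ so any feasible swap sequence for $I$ is feasible for the relaxed instance, and then invoke the exact formula of Yasui et al.\ on $H'$. Your explicit check that independence of $V'$ in $H$ is what guarantees $E\subseteq E'$ is a detail the paper leaves implicit, but the argument is the same.
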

\begin{proof}
	Consider the graph $ H' = (V, E') $
	which is derived from $H$ and $ V' $ in the following way:
	connect every vertex in $ V' $ with all vertices in $ V \setminus V' $.
	Additionally, add all missing edges in $ V \setminus V' $.
	Then $ H' $ is a complete split graph.
	Furthermore, $ E \subseteq E' $ holds,
	and thus also $ \TT_H \subseteq \TT_H' $.
	Consider a solution $ \mS $ of size $N$ to $T$.
	This is also a solution
	to the token swapping instance $ T' \coloneqq (H', Q, a \circ (a')^{-1}, \mathrm{id}) $.
	Thus, we have
	\begin{align}
		\card{\mS} \geq \mathrm{OPT}(T').
	\end{align}
	According to \cite{Yasui2015SwappingLT},
	the optimal value for $T'$ is given by
	\begin{align}
		\mathrm{OPT}(T') = n - r + 2q.
	\end{align}
	Thus,
	\begin{align}
		\card{\mS} \geq n - r + 2q.
	\end{align}	
\end{proof}
We remark that Lemma~\ref{lem:competesplit}
is a generalization of the well-known fact that token swapping
on complete graphs has optimal value $ n - r $ (\cf \eg \cite{Kim2016}).
\textcolor{blue}{
When applying Lemma~\ref{lem:competesplit} in practise, an independent set needs to be constructed first.
In our implementation, we employ a simple greedy procedure that iterates over the vertices increasing in degree, adding it to the independent set if possible.
}

As already mentioned, both lower bounds complement each other:
while the distance bound is strong on sparse graphs,
the complete-split-graph bound is strong on dense graphs.
The following lemma allows to increase the lower bounds
from Lemmas~\ref{lem:distlb} and~\ref{lem:competesplit}
by one in some cases.
\begin{lemma}[Parity of solutions]
	\label{lem:parity}
	Let $ T = (H, Q, a, a') $ be a token swapping instance.
	Let $ \mathrm{sgn}(\pi) $ denote the sign of a permutation~$ \pi $.
	For any solution $ \mS = \tau_1, \tau_2, \ldots, \tau_N $ to $T$,
	it holds
	\begin{align}
		N \equiv \left[1 - \mathrm{sgn}(a' \circ a^{-1})\right] / 2 \mod 2
	\end{align}
\end{lemma}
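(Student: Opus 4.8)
The plan is to recall that each transposition is an odd permutation, so composing $N$ of them yields a permutation whose sign is $(-1)^N$. First I would rewrite the feasibility condition from Definition~\ref{def:td}. A feasible solution satisfies $a = \tau_N \circ \cdots \circ \tau_1 \circ a'$, which rearranges to $\tau_N \circ \cdots \circ \tau_1 = a \circ (a')^{-1}$. Here $a \circ (a')^{-1}$ is a genuine permutation of $V$ onto itself (a composition of two bijections $Q \to V$, giving a bijection $V \to V$), so its sign is well defined.

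Next I would apply the multiplicativity of the sign homomorphism: $\mathrm{sgn}(\tau_N \circ \cdots \circ \tau_1) = \prod_{i=1}^N \mathrm{sgn}(\tau_i) = (-1)^N$, since every $\tau_i$ is a transposition and hence has sign $-1$. Combining with the previous identity gives $(-1)^N = \mathrm{sgn}(a \circ (a')^{-1}) = \mathrm{sgn}(a' \circ a^{-1})$, using that a permutation and its inverse have the same sign. Finally I would translate the statement $(-1)^N = \mathrm{sgn}(a' \circ a^{-1})$ into the claimed congruence: if the sign is $+1$ then $N$ is even, i.e. $N \equiv 0 = (1 - 1)/2 \pmod 2$; if the sign is $-1$ then $N$ is odd, i.e. $N \equiv 1 = (1 - (-1))/2 \pmod 2$. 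In both cases $N \equiv [1 - \mathrm{sgn}(a' \circ a^{-1})]/2 \pmod 2$, which is exactly the assertion.

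There is essentially no hard step here; this is a one-line parity argument. The only point requiring a moment of care is the bookkeeping of which object the sign is taken of — verifying that $a \circ (a')^{-1}$ (equivalently $a' \circ a^{-1}$) is a self-map of $V$ so that $\mathrm{sgn}$ makes sense, and that the orientation of the composition in the feasibility constraint matches. I would also note in passing the obvious corollary that is the real point of the lemma: when the parity of $[1 - \mathrm{sgn}(a' \circ a^{-1})]/2$ disagrees with the parity of a candidate lower bound value from Lemma~\ref{lem:distlb} or Lemma~\ref{lem:competesplit}, that bound may be increased by one, since the optimal $N$ must have the prescribed parity.
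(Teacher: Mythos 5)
Your proof is correct and follows essentially the same route as the paper's: rearrange the feasibility condition into $\tau_N\circ\cdots\circ\tau_1 = a\circ(a')^{-1}$, apply multiplicativity of $\mathrm{sgn}$ to get $(-1)^N$, and translate into the stated congruence. If anything you are slightly more careful than the paper, which writes the feasibility condition with $a$ and $a'$ in the opposite order from Definition~\ref{def:td}; your observation that a permutation and its inverse share the same sign closes that gap.
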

\begin{proof}
	For any solution $ \mS $, it holds
	\begin{align}
		a' = \tau_N \circ \ldots \circ \tau_2 \circ \tau_1 \circ a
	\end{align}
	or, equivalently
	\begin{align}
		a' \circ a^{-1} = \tau_N \circ \ldots \circ \tau_2 \circ \tau_1.
	\end{align}
	Using the fact that $ \mathrm{sgn}(\sigma \circ \pi)
		= \mathrm{sgn}(\sigma) \cdot \mathrm{sgn}(\pi) $
	for any two permutations $ \sigma, \pi $, we have
	\begin{align*}
		\mathrm{sgn}(a' \circ a^{-1})
			= \mathrm{sgn}(\tau_N) \cdots \mathrm{sgn}(\tau_2)
				\cdot \mathrm{sgn}(\tau_1)
			= (-1)^N.
	\end{align*}
	Thus, $N$ is either even or odd, independent of $S$:
	\begin{align}
		N \equiv \left[1 - \mathrm{sgn}(a' \circ a^{-1})\right] / 2 \mod 2.
	\end{align}
\end{proof}
With these theoretical properties of our solution method
to the qubit routing by swap insertion problem established,
we now present various experimental results.

	\section{Experimental Results}
\label{sec:exp}
In the following, we benchmark the proposed routing algorithm.
We perform experiments covering four aspects:
\begin{enumerate}
	\item The computation time and solution quality
		when solving the TAP model with strengthening inequalities.
	\item The quality and speed of the token swapping approximation algorithm.
	\item The quality of the routing solution produced by the combined algorithm
		compared to other routing methods.
	\item The quality of solutions produced by quantum algorithms compiled with our routing method,
		executed on actual quantum hardware.
\end{enumerate}
The datasets generated and analyzed during the current study are
available from the corresponding author on reasonable request.

\subsection{Benchmarking the Strengthened TAP Model}\label{sec:benchsgi}
In the first step of the proposed decomposition approach, a TAP needs to be solved.
Here, we give numerical results
showing how the introduction of subgraph isomorphism cutting planes
(Section~\ref{sec:subgi}) can
accelerate the solution of the TAP binary linear program.
The \textcolor{blue}{linearized} flow model~\eqref{eq:edgemodel} is implemented and solved
via the Python API of \emph{Gurobi} (\cite{gurobi}).
Before parsing the model to the solver,
we generate a set of valid inequalities introduced in Section~\ref{sec:subgi}.
These are added to a cut pool.
%by a callback routine that is invoked
%whenever a new feasible solution is found.
%All inequalities satisfied with equality by the new solution
%are added to the pool.
The solver can add cuts from this pool to the model
in order to cut off relaxed solutions.

The set of valid inequalities is generated as follows:
first, the graphs $ H^d $ for $ 0 \leq d \leq \max_{i, j \in H} d_H(i, j) $
are created and transformed into their respective line graphs.
To generate a subset of gates for which a valid cut may be derived,
we first choose time steps~$ t_0 $ and~$ t_1 $ such that $ 1 \leq t_0 < t_1 \leq L $.
Then, all gates in gate groups between~$ t_0 $ and~$ t_1 $ are considered.
The corresponding connectivity graph is generated
and transformed into its line graph.
To perform the node-induced-subgraph isomorphism check,
we employ an implementation of the VF2 algorithm
(see \cite{cordella2005isom, cordella01isom})
from the NetworkX python package (\cite{hagbger2008networkx}).
This procedure is repeated by iterating over time steps $ t_0 $, $ t_1 $
until all possible \textcolor{blue}{time step pairs} have been checked.

In the first benchmark, we measure the computation time
required for solving TAP instances to optimality.
We compare the total time taken with strengthening inequalities
to the time taken for the bare model~\eqref{eq:edgemodel}.
We measure the ratio $ r_t = t_{\mathrm{SGI}} / t_{\mathrm{bare}} $,
where $ t_{\mathrm{SGI}} $ and $ t_{\mathrm{bare}} $
are the runtimes of the strengthened model and the bare model, respectively.
\textcolor{blue}{The time consumed by cutting plane generation
lies between $0.1$~s and 20~s and}
is included in the total runtime $ t_{\mathrm{SGI}} $.
%Based on practical experience,
%the value of ten seconds is a good compromise
%between time investment and cut pool size.

\textcolor{blue}{In all instances $ (H, Q, \Gamma) $,
the hardware graph $H$ has 8 vertices
and $ Q \coloneqq \set{q_1 \ldots, q_8} $.
Tests are performed on four hardware graphs with different sparsity,
namely the line, ring, Y and ladder graph, illustrated in Fig~\ref{fig:Hs}.}
Each of the gate groups in $ \Gamma = G^1, G^2, \ldots, G^L $
consists of $\floor*{L/2}$ randomly chosen pairs of qubits among the set $\set{q_1,\cdots,q_{L}}$.
This results in instances with as many gates per time step as possible,
making the instances computationally hard.
The depth~$L$ is varied from 4 to 8,
where 10 instances are generated per value of $L$.
The instance sizes resemble the capabilities
of currently available quantum hardware.
All instances are solved to optimality.
\textcolor{blue}{The absolute solutions times lie between 0.6~s and 1000~s.
Results are given in Table~\ref{tab:allocrt}.
Shorter average runtimes are achieved for
$L=5$ on ring, Y, and ladder.
A maximum saving of to 72\% is observed for the $L=4$ line.
Furthermore, we observe rather high differences between minimum and maximum runtime ratio,
indicating a strong instance dependence.
Summarizing, the separation of valid subgraph isomorphism inequalities
can improve the TAP-model runtime.
Thus, also the algorithm for the full routing problem
benefits from generating valid inequalities for the TAP subproblem.
However, the strong instance dependence
suggests that more research is necessary in order to fully exploit the potential
of subgraph isomorphism inequalities.}

For the same benchmark set,
we measure the tightness of the lower bound
on the number of required swaps in an optimal routing solution,
given by the optimal TAP solution.
To this end, the objective of the full routing model (BIP)
from \cite{nannicini2021} is adjusted to minimize swap count instead of depth.
Solving this model yields the optimal swap number.
We measure the relative bound \textcolor{blue}{$ r_b = \frac{\mathrm{OPT}_{\mathrm{TAP}}}{\mathrm{OPT}_{\mathrm{BIP}}} $},
with $ \mathrm{OPT}_{\mathrm{BIP}} $ and $ \mathrm{OPT}_ {\mathrm{TAP}} $
denoting the optimal BIP solution and TAP solution, respectively.
These results are shown in Table~\ref{tab:allocb}.
We observe tight bounds with a relative difference of at most 23\% ($L=8$ line),
together with small deviations from the average value.
This is a key feature of our decomposition approach:
optimal TAP solution values typically give a good estimate
for the number of required swaps.
\begin{figure}[]
	\centering
	\subfloat[Line]{
		\includegraphics[height=2cm]{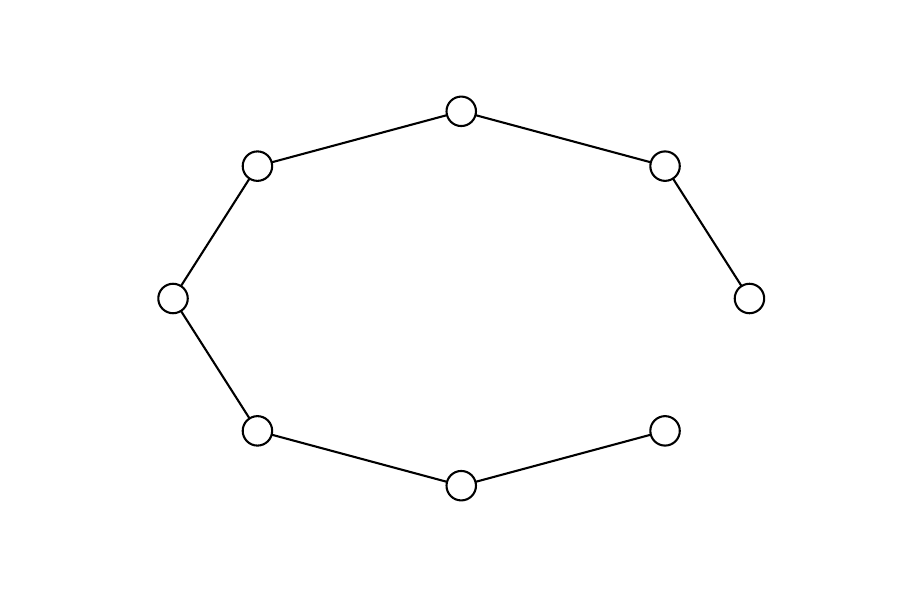}
		\label{fig:Hline}
	}
	\subfloat[Ring]{
		\includegraphics[height=2cm]{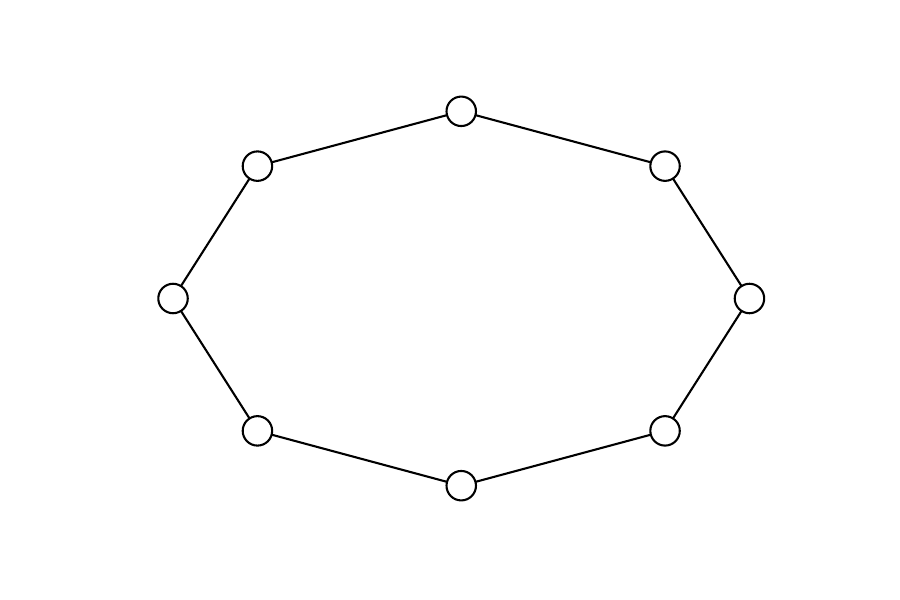}
		\label{fig::Hline}
	}
	\subfloat[Y]{
	\includegraphics[height=2cm]{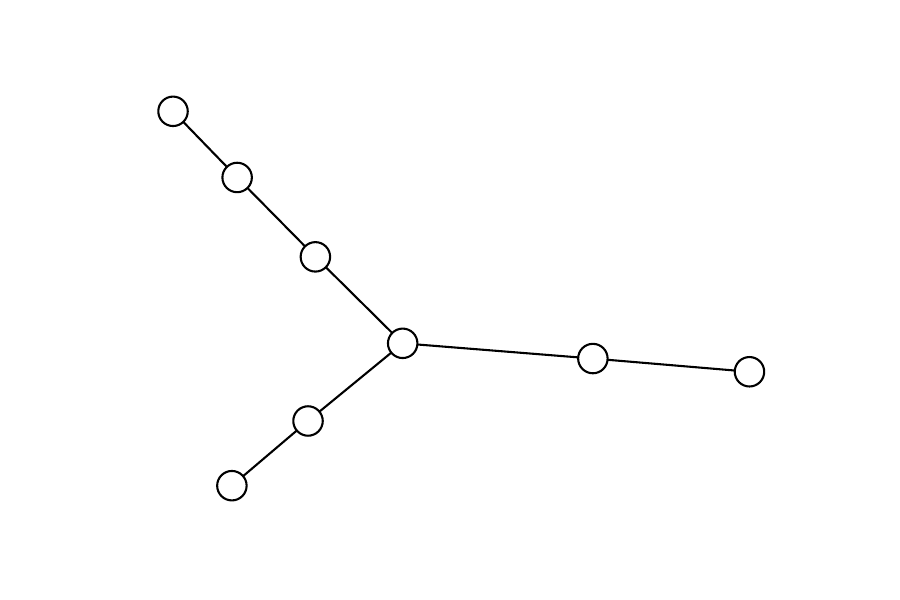}
	\label{fig::Hy}
	}
	\subfloat[Ladder]{
	\includegraphics[height=2cm]{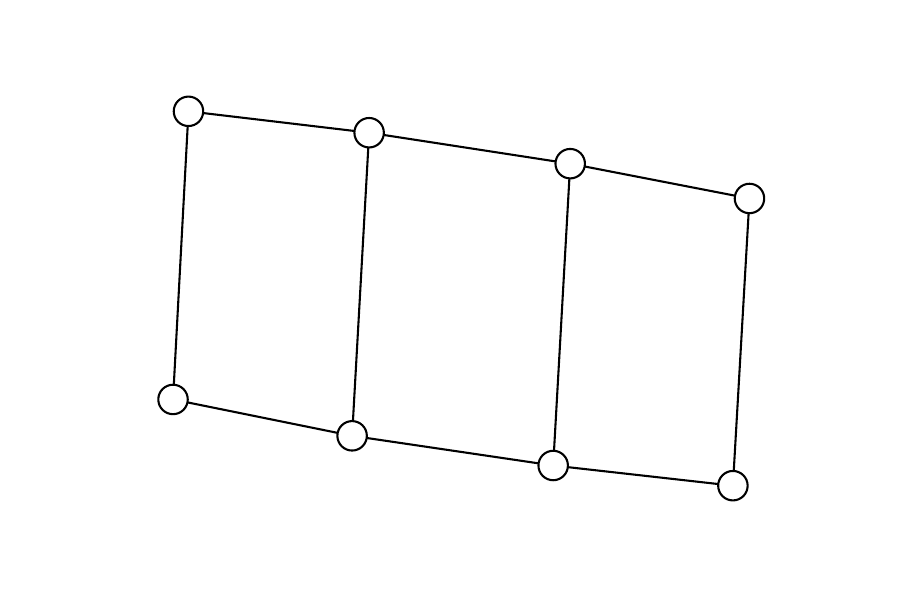}
	\label{fig::Hladder}
	}
	\caption{Hardware graphs used in the benchmarks of Sec.~\ref{sec:benchsgi} and Sec.~\ref{sec:bench}.}
	\label{fig:Hs}
\end{figure}
\begin{table}
	\makebox[\textwidth][c]{
	\subfloat[Rutnime comparison when solving to optimality.]
	{
%		\begin{tabular}{cccccc}
%			\toprule
%			 &\multicolumn{5}{c}{$L$} \\
%			\cmidrule{2-6}
%			$H$ & 4 & 5 & 6 & 7 & 8  \\
%			\midrule
%			Line & 1.23  $\pm$  0.98  &  1.28  $\pm$  0.57  &  2.87  $\pm$  1.78  &  2.45  $\pm$  1.33  &  1.18  $\pm$  0.59  \\
%			\midrule
%			Ring & 0.69  $\pm$  0.1  &  0.97  $\pm$  0.34  &  1.49  $\pm$  0.82  &  1.32  $\pm$  0.61  &  0.99  $\pm$  0.3  \\ 
%			\midrule
%			Y & 0.83  $\pm$  0.29  &  1.13  $\pm$  0.64  &  1.46  $\pm$  0.53  &  1.87  $\pm$  0.88  &  1.36  $\pm$  0.75  \\ 
%			\midrule
%			Ladder &0.86  $\pm$  0.28  &  1.17  $\pm$  0.59  &  2.59  $\pm$  2.61  &  2.37  $\pm$  1.31  &  2.49  $\pm$  1.88  \\ 
%			\bottomrule
%		\end{tabular}
		\begin{tabular}{cccccccccccccccc}
			\toprule
			&\multicolumn{15}{c}{$L$} \\
			\cmidrule{2-16}
			& \multicolumn{3}{c}{4} & \multicolumn{3}{c}{5} & \multicolumn{3}{c}{6} & \multicolumn{3}{c}{7} & \multicolumn{3}{c}{8}\\
			\cmidrule(lr){2-4}\cmidrule(lr){5-7}\cmidrule(lr){8-10}\cmidrule(lr){11-13}\cmidrule(lr){14-16}
			$H$ & min & max & avg & min & max & avg & min & max & avg & min & max & avg & min & max & avg  \\
			\midrule
%			Line & 0.25 & 3.84 & 1.28 & 0.76 & 2.53 & 1.23 & 1.25 & 7.59 & 2.87 & 0.3 & 4.17 & 2.45 & 0.35 & 2.12 & 1.18 \\
			Line & 0.28 & 3.9 & 1.31 & 0.8 & 2.56 & 1.28 & 1.27 & 7.76 & 2.96 & 0.31 & 4.19 & 2.49 & 0.47 & 2.35 & 1.35 \\
			\midrule
%			Ring & 0.46 & 0.82 & 0.97 & 0.48 & 1.82 & 0.69 & 0.61 & 3.48 & 1.49 & 0.73 & 2.74 & 1.32 & 0.47 & 1.46 & 0.99 \\
			Ring & 0.5 & 0.98 & 1.02 & 0.64 & 1.9 & 0.8 & 0.76 & 3.72 & 1.56 & 0.74 & 2.78 & 1.34 & 0.48 & 1.47 & 1.01 \\
			\midrule
%			Y & 0.45 & 1.5 & 1.13 & 0.47 & 2.71 & 0.83 & 0.73 & 2.49 & 1.46 & 0.76 & 3.38 & 1.87 & 0.36 & 2.54 & 1.36 \\
			Y & 0.48 & 1.58 & 1.16 & 0.49 & 2.76 & 0.87 & 0.75 & 2.54 & 1.49 & 0.77 & 3.39 & 1.89 & 0.56 & 2.87 & 1.74 \\
			\midrule
%			Ladder & 0.32 & 1.15 & 1.17 & 0.47 & 2.45 & 0.86 & 0.77 & 9.74 & 2.59 & 0.6 & 4.51 & 2.37 & 0.92 & 7.21 & 2.49 \\
			Ladder & 0.36 & 1.16 & 1.2 & 0.48 & 2.47 & 0.9 & 0.78 & 9.72 & 2.62 & 0.63 & 4.58 & 2.4 & 0.93 & 7.26 & 2.53 \\
			\bottomrule
		\end{tabular}
		\label{tab:allocrt}
	}}\\
	\makebox[\textwidth][c]{
	\subfloat[Bound quality when solving to optimality.]
	{
%		\begin{tabular}{cccccc}
%			\toprule
%			&\multicolumn{5}{c}{$L$} \\
%			\cmidrule{2-6}
%			$H$ & 4 & 5 & 6 & 7 & 8  \\
%			\midrule
%			Line & 1.0  $\pm$  0.0  &  1.0  $\pm$  0.0  &  1.0  $\pm$  0.0  &  0.94  $\pm$  0.09  &  0.97  $\pm$  0.07  \\
%			\midrule
%			Ring & 1.0  $\pm$  0.0  &  1.0  $\pm$  0.0  &  1.0  $\pm$  0.0  &  1.0  $\pm$  0.0  &  0.97  $\pm$  0.04 \\
%			\midrule
%			Y & 1.0  $\pm$  0.0  &  0.94  $\pm$  0.1  &  0.96  $\pm$  0.06  &  0.93  $\pm$  0.06  &  0.89  $\pm$  0.06 \\
%			\midrule
%			Ladder & 1.0  $\pm$  0.0  &  1.0  $\pm$  0.0  &  1.0  $\pm$  0.0  &  0.94  $\pm$  0.09  &  0.97  $\pm$  0.07  \\
%			\bottomrule
%		\end{tabular}
		\begin{tabular}{cccccccccccccccc}
			\toprule
			&\multicolumn{15}{c}{$L$} \\
			\cmidrule{2-16}
			& \multicolumn{3}{c}{4} & \multicolumn{3}{c}{5} & \multicolumn{3}{c}{6} & \multicolumn{3}{c}{7} & \multicolumn{3}{c}{8}\\
			\cmidrule(lr){2-4}\cmidrule(lr){5-7}\cmidrule(lr){8-10}\cmidrule(lr){11-13}\cmidrule(lr){14-16}
			$H$ & min & max & avg & min & max & avg & min & max & avg & min & max & avg & min & max & avg  \\
			\midrule
			Line & 1.0 & 1.0 & 1.0 & 0.8 & 1.0 & 0.98 & 0.9 & 1.0 & 0.98 & 0.85 & 1.0 & 0.95 & 0.77 & 1.0 & 0.86 \\
			\midrule
			Ring & 1.0 & 1.0 & 1.0 & 0.8 & 1.0 & 0.98 & 1.0 & 1.0 & 1.0 & 1.0 & 1.0 & 1.0 & 0.92 & 1.0 & 0.97 \\
			\midrule
			Y & 1.0 & 1.0 & 1.0 & 0.8 & 1.0 & 0.94 & 0.89 & 1.0 & 0.96 & 0.85 & 1.0 & 0.93 & 0.81 & 1.0 & 0.89 \\
			\midrule
			Ladder & 1.0 & 1.0 & 1.0 & 1.0 & 1.0 & 1.0 & 1.0 & 1.0 & 1.0 & 0.8 & 1.0 & 0.94 & 0.83 & 1.0 & 0.97 \\
			\bottomrule
		\end{tabular}
		\label{tab:allocb}
	}}\\
%		\subfloat[Optimality gap comparison when solving with limited runtime.]
%		{
%			\begin{tabular}{ccccccccccccc}
%				\toprule
%				&\multicolumn{12}{c}{$L$} \\
%				\cmidrule{2-13}
%				 & \multicolumn{3}{c}{11} & \multicolumn{3}{c}{12} & \multicolumn{3}{c}{13} & \multicolumn{3}{c}{14} \\
%				$H$ & min & max & avg & min & max & avg & min & max & avg & min & max & avg  \\
%				\midrule
%				Line & 0.98 & 0.98 & 0.98 & 0.97 & 0.98 & 0.98 & 0.97 & 0.99 & 0.99 & 0.97 & 1.0 & 1.0 \\
%				\midrule
%				Ring & 1.0 & 1.0 & 0.99 & 0.99 & 1.0 & 1.0 & 0.99 & 1.0 & 1.0 & 0.99 & 1.0 & 0.99 \\
%				\midrule
%				Y & 0.98 & 0.98 & 0.97 & 0.97 & 0.98 & 0.98 & 0.97 & 0.99 & 0.99 & 0.97 & 1.01 & 1.01 \\
%				\midrule
%				Ladder & 1.0 & 1.0 & 0.98 & 0.98 & 1.0 & 1.0 & 0.98 & 1.0 & 0.99 & 0.98 & 1.0 & 1.0 \\
%				\bottomrule
%			\end{tabular}
%			\label{tab:allocg}
%		}
	\caption{\textcolor{blue}{TAP-model benchmark results.
		The different hardware graphs, denoted $H$, are depicted in Fig.~\ref{fig:Hs}.
		The qubit set is $ Q \coloneqq \set{q_1 \ldots, q_8} $.
		Each instance contains $L$ many gate groups $ G^1, G^2, \ldots, G^L $,
		where each gate group consists of $\floor*{L/2}$ randomly chosen pairs of qubits among the set $\set{q_1,\cdots,q_{L}}$.
		10 instances are generated per depth $L$ and graph $H$.
		Each instance is solved
		with and without additional strengthening inequalities.}
		We compare the ratio $ r_t $ of runtimes in~(a).
		Additionally, in~(b)
		we give the relative bound quality $ r_b $ for the optimal TAP value
		as a lower bound on the number of required swaps in an optimal routing solution.}
	\label{tab:allocbm}
\end{table}
%The second benchmark compares the optimality gap\fwil{Hier besser ObjVal vergleichen? Weglassen?}
%after a fixed runtime limit of 300~s
%on larger instances of depths~$L=11$ to~$L=14$.
%In each instance $H$ is an 8-vertex line graph and $ Q \coloneqq \set{q_1 \ldots, q_8} $.
%Each of the gate groups in $ \Gamma = G^1, G^2, \ldots, G^L $
%consists of $4$ randomly chosen pairs of qubits in $Q$.
%The \emph{absolute optimality gap} is defined as the difference
%between the best solution value and the largest lower bound.
%The \emph{relative optimality gap} is defined as the absolute optimality gap
%divided by the best solution value.
%We measure the ratio $r_g$ between the relative optimality gap of the strengthened model and the bare model.
%The results can be seen in Table~\ref{tab:allocg}.
%We observe a smaller gap for the strengthened model on average
%for all instance sizes.
%An improvement of up to 24\% is achieved.

\subsection{Benchmarking the Token Swapping Approximation}
The second step in the decomposition approach
consists of several consecutive token swapping instances.
Here, we evaluate the performance
of the employed token swapping approximation.
All algorithms described in this section
have been implemented in Python from scratch.
First, we compare the original 4-approximation algorithm
given in \cite{miltzow2016} to our modified version.
For this purpose, we generate random token swapping instances,
\ie initial and final allocations\footnote{Due to symmetry, it is sufficient to only vary initial allocations while fixing the final allocation.}
on three types of planar graphs with different {sparsity}.
As graph types we choose ring graphs, ladders and square meshes with node numbers of 16, 36 and 64
for each type.
As in the previous section, this choice of graph types and sizes
aims at resembling current realizations
of quantum computer hardware connectivity graphs.
We generate 100 instances per graph.
In Table~\ref{tab:tsapp}, we give the average ratio $ r_\# $ of swap numbers
between the modified version and the original version
as well as the average ratio of depths $ r_d $ .
\begin{table}
	\centering
	\begin{tabular}{ccccccc}
		\toprule
		&\multicolumn{2}{c}{Ring} & \multicolumn{2}{c}{Ladder}& \multicolumn{2}{c}{Mesh} \\
		\cmidrule(r){2-3}\cmidrule(rl){4-5}\cmidrule(l){6-7}
		$N$ & $r_\#$ & $r_d$ & $r_\#$ & $r_d$ & $r_\#$ & $r_d$ \\
		\midrule
		16 & 0.96 & 0.91 & 0.88 & 0.82 & 0.81& 0.78\\
		\midrule
		36 & 1.0 & 1.0& 0.91 & 0.83 &  0.83 & 0.78 \\
		\midrule
		64 &  1.0 & 0.99 & 0.93 & 0.79 & 0.84 & 0.81\\
		\bottomrule
	\end{tabular}
	\caption{Comparison of original
		and modified token swapping approximation algorithms.
		We give the average ratio of inserted swaps $ r_\# = \#_{\text{mod}} / \#_{\text{orig}} $
		and resulting depths $ r_d = d_{\text{mod}} / d_{\text{orig}} $.
		Comparison are made on rings, ladders and meshes
		with increasing vertex number~$N$.}
	\label{tab:tsapp}
\end{table}

Firstly, note that these ratios are less than or equal to $1$
across all instances.
This means that our modification always performs as least as good
as the original version, or better (if the factor is smaller than $1$).
Furthermore, both swap and depth ratio
are roughly constant across graph sizes for a given graph type.
However, the advantage of the modified version increases
with growing connectivity of the graph type.
An intuitive explanation of this behaviour
is the increase of swapping possibilities with growing edge number.
This allows the modified version to employ its potential
of deciding for \qm{good} swaps instead of choosing them randomly.

Secondly, we compare the modified 4-approximation algorithm
to the exact branch-and-bound algorithm
in terms of runtime and solution quality.
We measure the approximation ratio $ r_\# = \#_{\text{approx}} / \#_{\text{opt}} $
as well as the runtime ratio $ r_t = t_{\text{approx}} / t_{\text{bnb}} $
on various instances.
As graph types, we choose lines, rings and ladders.
Again, this choice tries to resemble real hardware graphs
while we also want to make use of the fact
that token swapping can be solved efficiently on lines and rings\footnote{In fact, token swapping can be solved efficiently on lines, rings, cliques, stars and brooms, see \cite{Kim2016}.}.
This allows use to measure the approximation quality
for instances which are intractable for the exact algorithm.
Experimental results are summarized in Table~\ref{tab:tsexact}.

First, we observe that for lines
the approximation ratio is unity across all instances.
As it might be conjectured from these numerical results,
one can also convince oneself theoretically
that the approximation algorithm always finds an optimal solution on lines.
 
Furthermore, Table~\ref{tab:tsexact} shows
that for ring and ladder graphs
the average approximation ratio is less than or equal to 1.25,
which is well below the theoretical value of 4.
This indicates that the theoretical approximation factor of 4
underestimates the approximation quality observed in practice.
As already mentioned, the proof in \cite{miltzow2016} delivers an approximation ratio
of strictly less than four.

Comparing runtimes, we note a drastic improvement
with growing graph size on all graph types.
For the largest instances
the exact algorithm was able to solve in reasonable time,
an average reduction of 96.5\% is achieved.
Moreover, all absolute runtimes for the approximation algorithm
are well below 1~s. Altogether, the approximation algorithm typically delivers
close-to-optimal solutions
with a significant time saving on all instances considered.

\begin{table}
	\centering
	\begin{tabular}{ r r r r r r r r r r }
		\toprule
		&\multicolumn{3}{c}{Line} & \multicolumn{3}{c}{Ring}& \multicolumn{3}{c}{Ladder} \\
		\cmidrule(r){2-4}\cmidrule(rl){5-7}\cmidrule(l){8-10}
		$N$ & $r_\#$ & $r_t$ & $\bar{t}_{\text{appr}}$&$r_\#$& $r_t$ &$\bar{t}_{\text{appr}}$& $r_\#$ & $r_t$&$\bar{t}_{\text{appr}}$ \\
		\midrule
		5 & 1.0& 0.2 & 0.0& 1.25 & 0.3& 0.0 & 1.04 & 0.6& 0.0 \\
		\midrule
		10 & 1.0 & 0.05 & 0.0 & 1.19 & 0.2 & 0.0& 1.19 & 0.2& 0.0\\
		\midrule
		12 & 1.0 & - & 0.0 & 1.23 & 0.08& 0.0 & 1.13 & 0.035 & 0.0\\
		\midrule
		50 &  1.0 & - & 0.8 & 1.17 & - & 0.4 &  & & 0.4\\
		\bottomrule
	\end{tabular}
	\caption{Comparison of token swapping approximation and exact algorithm.
		We give the average approximation ratio $ r_\# =
			\#_{\text{appr}} / \#_{\text{opt}} $
		and runtime ratio $ r_t = t_{\text{appr}} / t_{\text{bnb}} $
		as well as the average absolute runtime
		of the approximation algorithm $ \bar{t}_{\text{appr}} $ in seconds.
		Instance graphs are lines, rings and ladders
		of increasing vertex number~$N$.
		Token swapping can be solved efficiently on lines and rings,
		\cf \cite{Kim2016}.
		This allows us to compare approximation ratios
		even when the runtime is intractable
		for the exact algorithm (indicated by \qm{-}).
		For ladders, no efficient exact algorithm exists.
		Therefore, no results are presented for large ladder graphs (empty cells).
		Averages are taken over 100 random instances.
		Exceptions due to excessive runtime are:
		54 instances for $ N = 10 $ lines, 22 for $ N = 12 $ rings
		and 18 for $ N = 12 $ ladders.}
	\label{tab:tsexact}
\end{table}

\subsection{Benchmarking the Routing Algorithm}\label{sec:bench}

In the following, the proposed routing method (abbreviated TAP+TS)
is compared to various routing algorithms
integrated in different open source quantum compilers:
The binary program (BIP) by Nannicini et al. (\cite{nannicini2021}),
solved via \emph{Gurobi},
Li et al.'s Sabre heuristic (\cite{Li2019}),
Qiskit's default compiler Stochastic swap (\cite{Qiskit}),
Cambridge Quantum Computing's Tket (\cite{cowtan2019,Sivarajah2020}),
Google's Cirq \cite{cirq_developers_2021_5182845},
\textcolor{blue}{
as well as the Bounded mapping tree (BMT) by Siraichi et.~al (\cite{Siraichi2019})
which is based on a decomposition approach similar to ours. 
BMT employs dynamic programming to solve a problem closely related to the TAP.}

We briefly summarize the Python implementation of our routing algorithm:
first, the two-qubit gates of the input circuit are grouped in layers containing as much gates as possible.
These gate groups together with the hardware graph form a TAP instance.
Next, the network flow model of the TAP instance is solved by \emph{Gurobi}.
The resulting TAP solution defines a set of token swapping problems,
which are then solved by the approximation algorithm
described in Section~\ref{sec:ts}.
Finally, the routed quantum circuit is constructed
by changing the target qubits in the gates of the original circuit
according to the allocation sequence
and by adding the computed swaps between subsequent allocations.

\textcolor{blue}{As performance metrics, we measure the resulting gate count and circuit depth.}
Benchmarks are performed on two different classes of circuits.
The first class is taken from a publicly available benchmark library
while the second is generated according to a protocol
used for benchmarking quantum hardware. 

\paragraph{QUEKO circuits.}

First, we use the publicly available benchmarks circuits
generated by Tan and Cong in \cite{Tan21}.
They propose a method for generating random circuits
on a given hardware graph with given depth
as well as given single and two-qubit gate numbers.
Their construction delivers instances of the routing by swap insertion problem
with optimal value $0$,
\ie there exists an allocation of logical to physical qubits
such that all gates can be performed without inserting swaps.

We use the \qm{benchmark for near-term feasibility} (BNTF) circuit set
from \cite{Tan21}.
It is subdivided into circuits on a 16-qubit hardware graph (Rigetti's Apsen-4, Fig.~\ref{fig:qeko16h})
with few gates, and circuits on a 54 qubit graph (Google’s Sycamore, Fig.~\ref{fig:qeko54h})
with many gates.
For each hardware graph, 90 circuits are available --
10 for each depth in $ \set{5, 10, \ldots, 45} $.
On each instance the time limit for BIP is the set to 3600~s.
The results are visualized in Figure~\ref{fig:resrouting}.
By construction, our algorithm will find a zero-swap solution
if the TAP is solved to optimality.
This can be observed in Figures~\ref{fig:qeko16g} and~\ref{fig:qeko54g}.
For this benchmark set, we achieve improvements in resulting depth
of up to a factor of~2 on small circuits (Figure~\ref{fig:qeko16d})
and up to factor of~5 on larger circuits (Figure~\ref{fig:qeko54d}),
compared to the best heuristic.
This improvement comes at the cost of increased runtime.
Indeed, the runtimes of all other heuristics applied in this benchmark
are in the order of a few seconds (not shown in Figure~\ref{fig:resrouting}),
while the proposed algorithm takes up to 12~s on average for small circuits (Figure~\ref{fig:qeko16t})
and up to 5~min on average for larger circuits (Figure~\ref{fig:qeko54t}).
However, we note a drastic reduction of runtime compared to the exact approach BIP (Figure~\ref{fig:qeko16t})
which is unable to find optimal solutions on larger instances in less than one hour.
\begin{figure}
	\centering
	\subfloat[]{\includegraphics[width=0.3\linewidth]{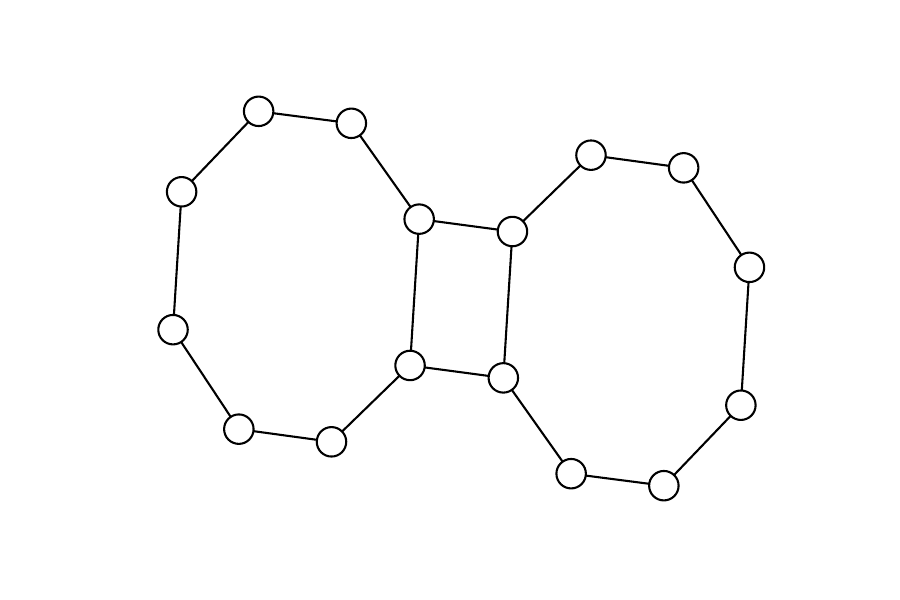}\label{fig:qeko16h}}
	\subfloat[]{\includegraphics[width=0.3\linewidth]{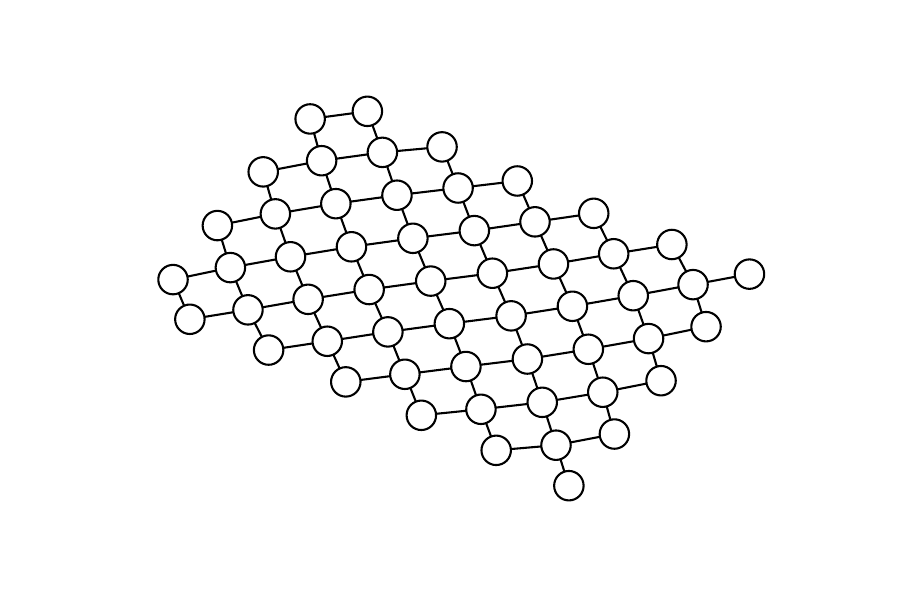}\label{fig:qeko54h}}
	\subfloat[]{\includegraphics[width=0.3\linewidth]{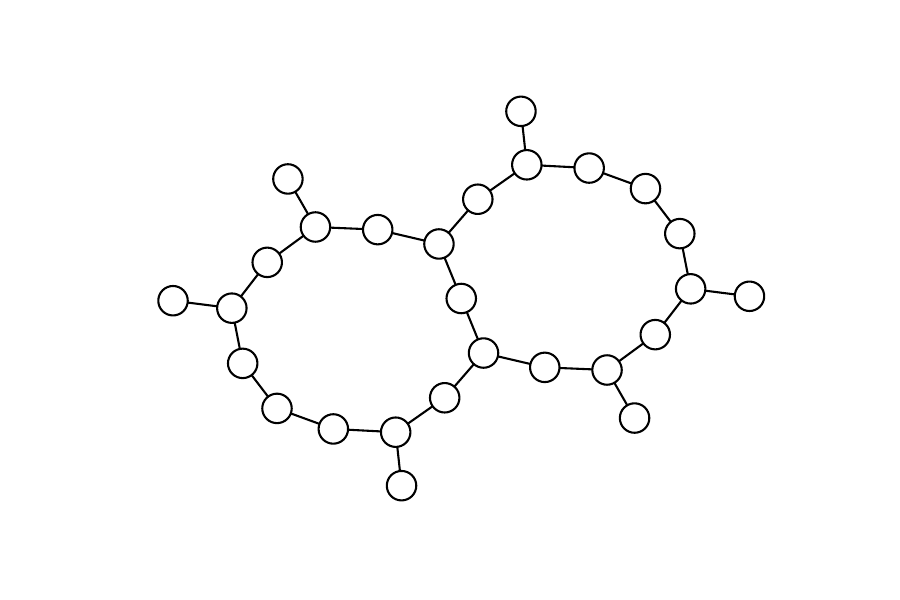}\label{fig:ehnh}}
	\caption{\textcolor{blue}{Hardware graphs of \emph{Aspen-4} (a), \emph{Sycamore} (b) and \emph{ibmq\_ehningen} (c).}
	}
	\label{fig:quekoh}
\end{figure}
\begin{figure}
	\centering
%	\setbox1=\hbox{\includegraphics[width=0.3\linewidth]{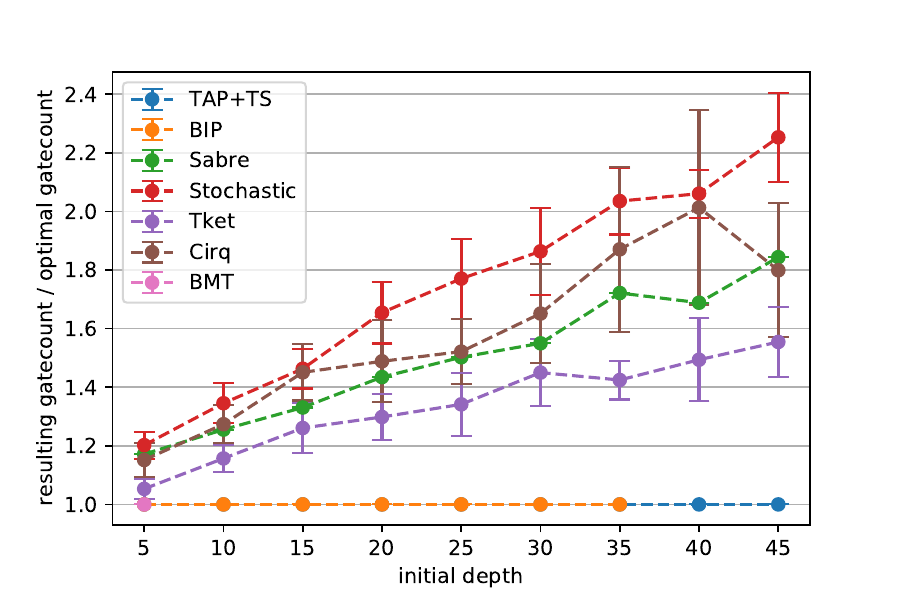}}
	\subfloat[Gatecount comparison on Aspen-4.]{\includegraphics[width=0.49\linewidth]{fig/queko16_gate.pdf}\label{fig:qeko16g}}
	\subfloat[Gatecount comparison on Sycamore.]{\includegraphics[width=0.49\linewidth]{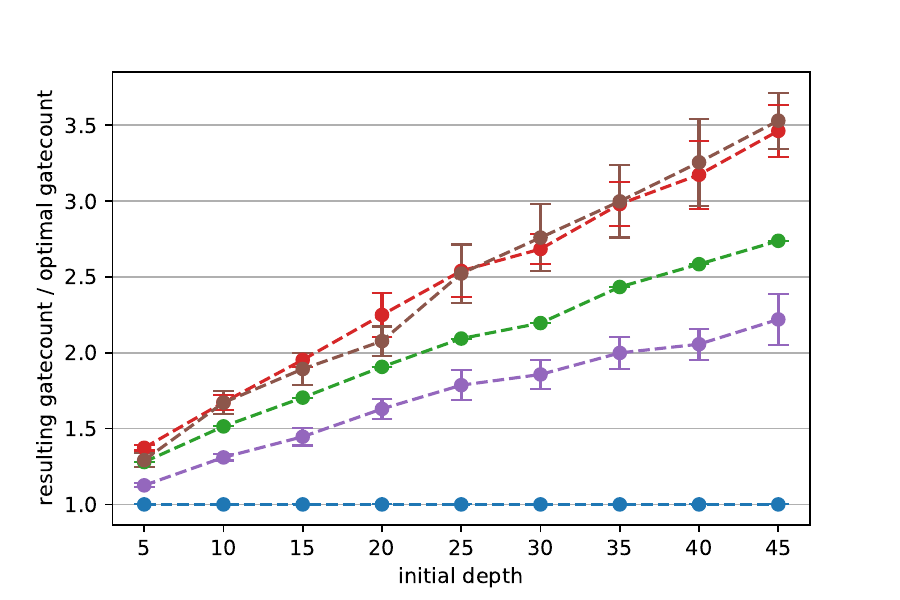}\label{fig:qeko54g}}\\
	\subfloat[Depth comparison on Aspen-4.]{\includegraphics[width=0.49\linewidth]{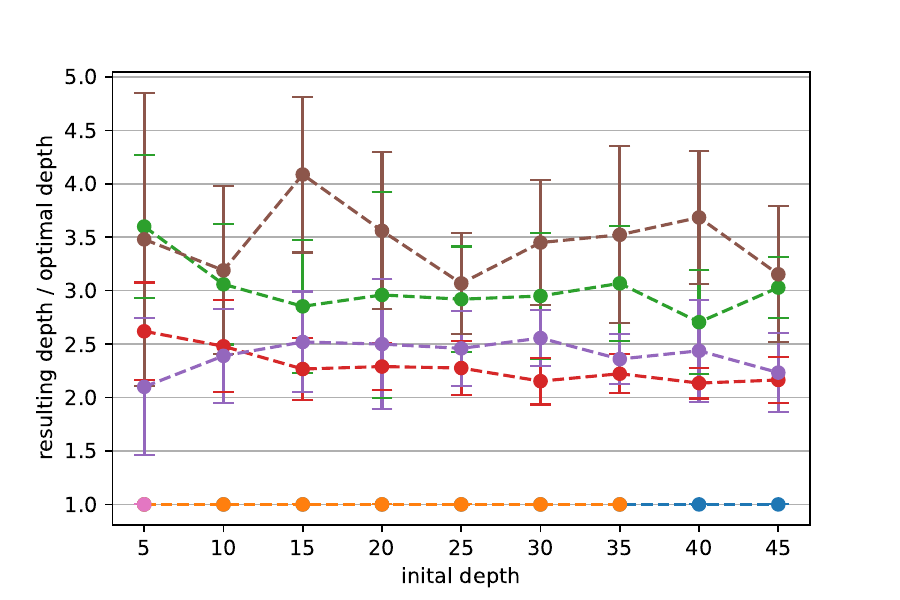}\label{fig:qeko16d}}
	\subfloat[Depth comparison on Sycamore.]{\includegraphics[width=0.49\linewidth]{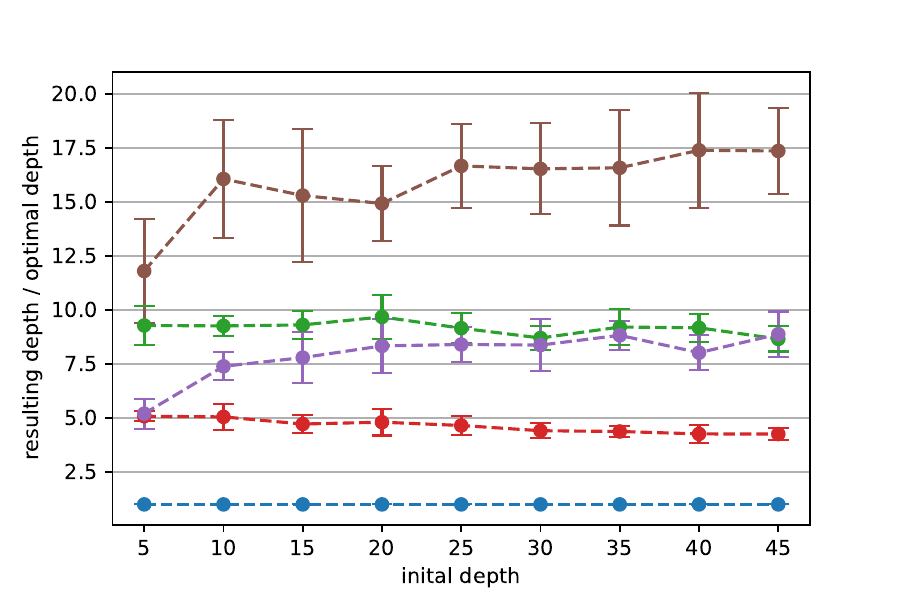}\label{fig:qeko54d}}\\
	\subfloat[Runtime comparison on Aspen-4.]{\includegraphics[width=0.49\linewidth]{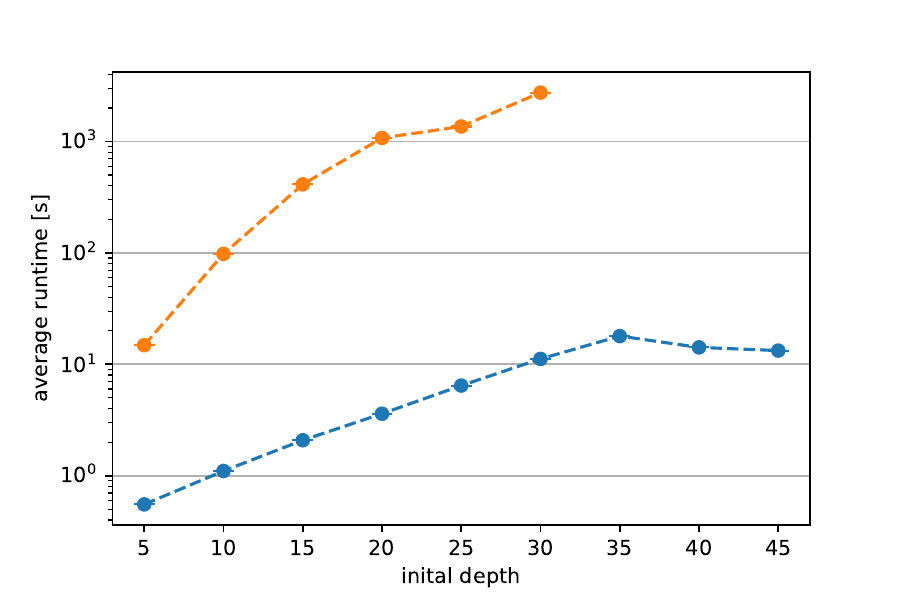}\label{fig:qeko16t}}
	\subfloat[Runtime comparison on Sycamore.]{\includegraphics[width=0.49\linewidth]{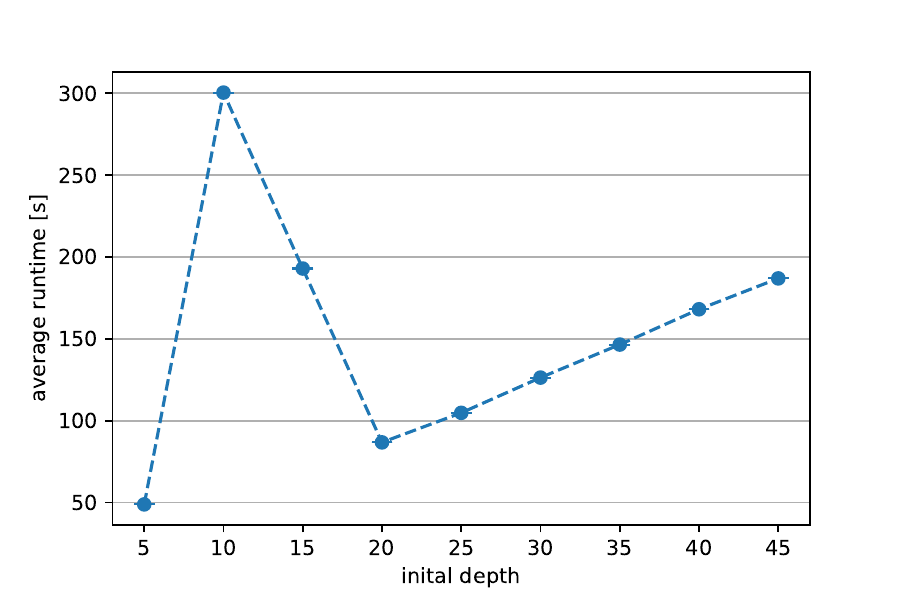}\label{fig:qeko54t}}\\
	\caption{\textcolor{blue}{Comparison of different routing algorithms on the QUEKO benchmark set.
		Data points are averaged over 10~instances.
		Error bars indicate the empirical standard deviation.
		The legend in (a) holds for all plots.
		Instances are the BNTF circuits from \cite{Tan21}
		on Rigetti's \emph{Aspen-4} 16-qubit hardware graph (left plots)
		as well as Google’s \emph{Sycamore} 54-qubit hardware graph (right plots).
		For every BNTF instance, a zero-swap solution exists.
		Missing data points for BIP and BMT indicate that no solution was returned after 3600~s runtime.
		(a), (b): Gatecount relative to optimum.
		(c), (d): Depth relative to optimum.
		(e), (f): Average runtime for BIP and BMT.
		Other heuristics are not shown since all runtimes are in the order of 1~s.}
		}
	\label{fig:resrouting}
\end{figure}

\paragraph{QV circuits.}

For a second benchmark,
we consider quantum volume (QV) circuits.
According to \cite{Moll2018}, a QV circuit of depth~$L$ and width~$m$
is a circuit on $m$~qubits with $L$~layers
where each layer consists of $ \floor{m / 2} $ random two-qubit gates.
Thus, the resulting circuit is maximally dense in terms of gates per layer.
QV circuits of equal depth and width are used to benchmark quantum computers
(see \cite{Cross2019}).
We consider QV circuits of equal depth and width.
\textcolor{blue}{
Circuits are constructed from the gate groups of the TAP instances generated in Sec~\ref{sec:benchsgi}
resulting in QV circuits with depths between $L=4$ and $L=8$.
As in Sec.~\ref{sec:benchsgi}, tests are performed on a 8-vertex-line, ring, Y and ladder graph.
Thus, the TAP instances associated with the routing instances generated here are exactly those of Sec.~\ref{sec:benchsgi}.}

\textcolor{blue}{
Optimal-depth solutions cannot be calculated in reasonable time for all instances.
However, BIP would find optimal solutions in terms of depth when solved to optimality.
Thus, we measure resulting depth and gate count relative to the BIP solution with a runtime limit of 7200~s.
}
Our results are visualized in Figures~\ref{fig:resrouting_qv} and \ref{fig:resrouting_qv2}.
In Figures~\ref{fig:qvline_d}, \ref{fig:qvring_d}, \ref{fig:qvy_d} and \ref{fig:qvl_d} we observe that the proposed algorithm
finds solutions which are close-to-optimal in terms of depth,
\textcolor{blue}{assuming BIP returned an optimal-depth solution after a runtime of 7200~s.
The average ratio of TAP+TS solution to BIP solution is at most 1.25\%.}
Furthermore, it performs best among all considered routing heuristics,
\textcolor{blue}{in particular better than the conceptually similar method BMT.}
This also holds true when comparing gate numbers,
as can be seen in Figures~\ref{fig:qvline_g}, \ref{fig:qvring_g}, \ref{fig:qvy_g} and \ref{fig:qvl_g}.
\textcolor{blue}{Interestingly, we observe that the proposed method delivers smaller gate numbers than BIP
for some instances on all topologies,
as indicated by values less than~1 in Figures~\ref{fig:qvline_g}, \ref{fig:qvring_g}, \ref{fig:qvy_g} and \ref{fig:qvl_g}.}
Again, the improvement comes at the cost of increased runtime.
While the other heuristics require several seconds
(not shown in Figs.~\ref{fig:resrouting_qv},\ref{fig:resrouting_qv2}),
\textcolor{blue}{the proposed method takes up to 900~s on average (Figure~\ref{fig:qvring_t}).
Still, the runtime is significantly shorter than for the exact method BIP 
which runs into the time limit of 7200~s on large instances (Figures~\ref{fig:qvring_t},~\ref{fig:qvl_t}).}

We discuss three reasons why the improved routing solutions
can be worth the additional time investment.
First, access to currently available quantum hardware
is often organized by a waiting queue,
such that the time for compilation is not a critical factor.
Second, different quantum circuits with equal connectivity graphs
only need to be routed once.
A popular example are parameterized quantum circuits
which are often considered as promising candidates
for first quantum applications.
The connectivity graph of such a circuit is not parameter dependent.
Third, current quantum hardware benefits strongly
from good routing solutions.
This is demonstrated in the next section.

\begin{figure}
	\centering
	\subfloat[Gatecount comparison on line.]{\includegraphics[width=0.49\linewidth]{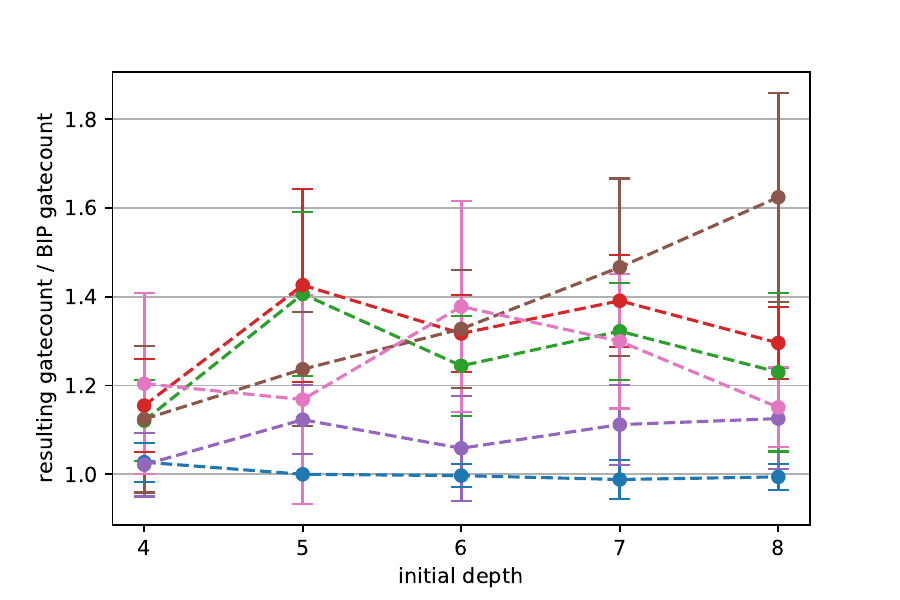}\label{fig:qvline_g}}
	\subfloat[Gatecount comparison on ring.]{\includegraphics[width=0.49\linewidth]{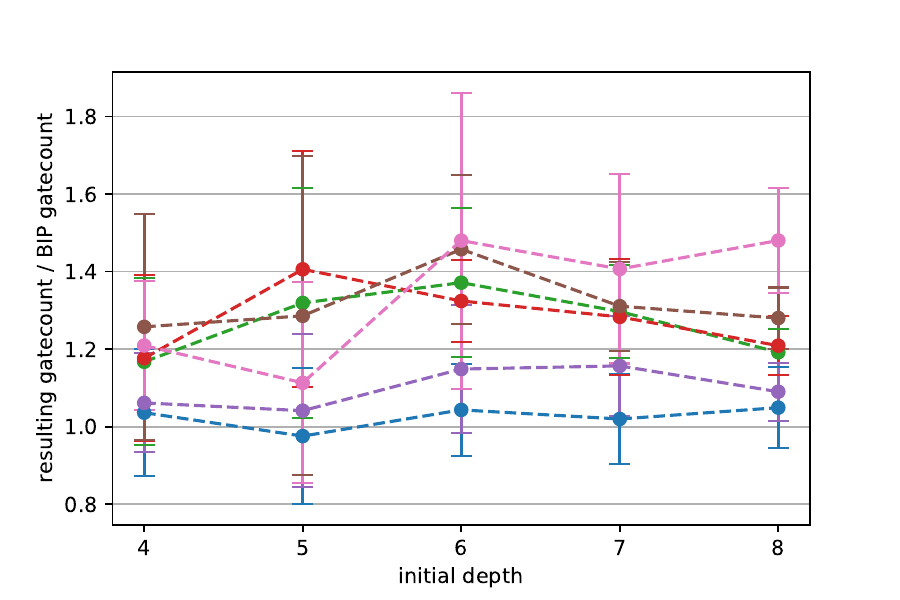}\label{fig:qvring_g}}\\
	\subfloat[Depth comparison on line.]{\includegraphics[width=0.49\linewidth]{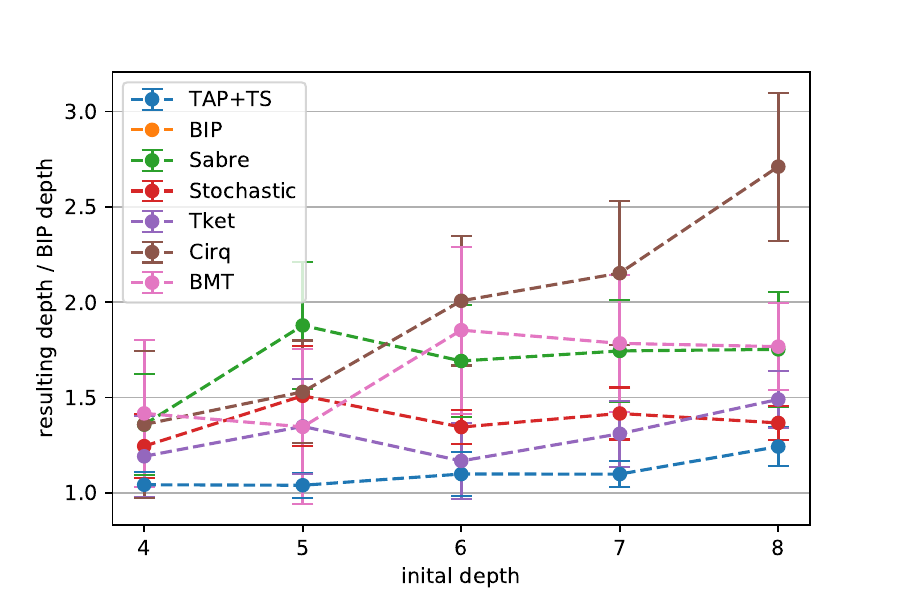}\label{fig:qvline_d}}
	\subfloat[Depth comparison on ring.]{\includegraphics[width=0.49\linewidth]{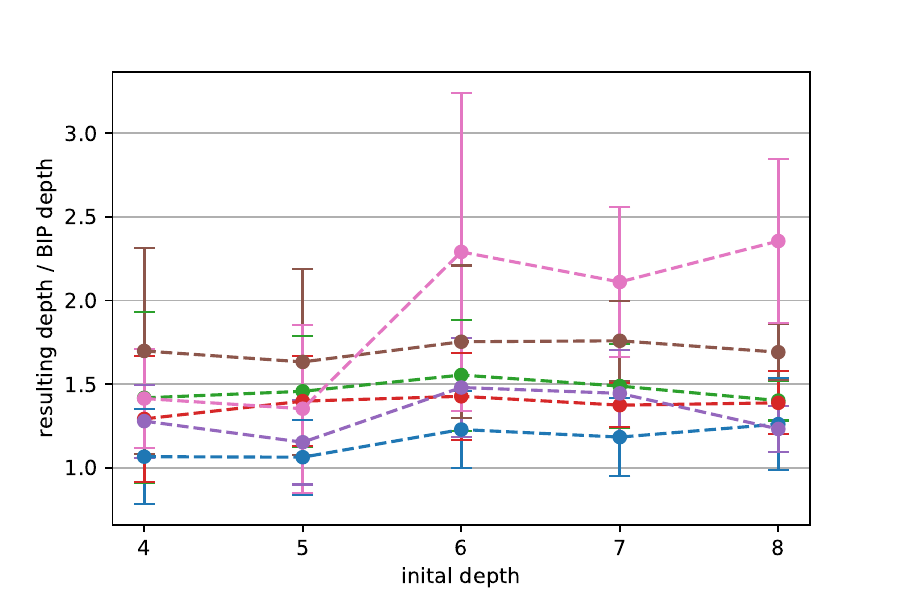}\label{fig:qvring_d}}\\
	\subfloat[Runtime comparison on line.]{\includegraphics[width=0.49\linewidth]{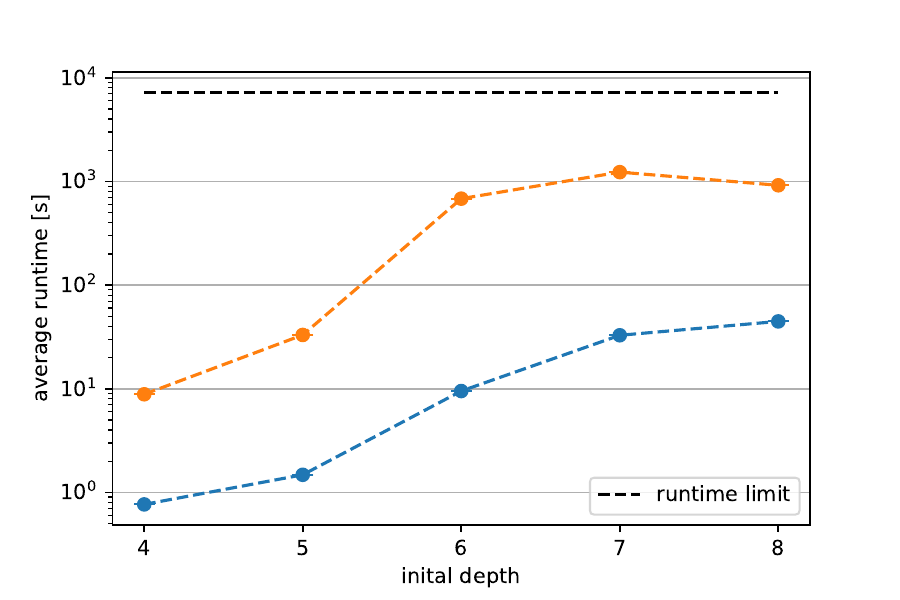}\label{fig:qvline_t}}
	\subfloat[Runtime comparison on ring.]{\includegraphics[width=0.49\linewidth]{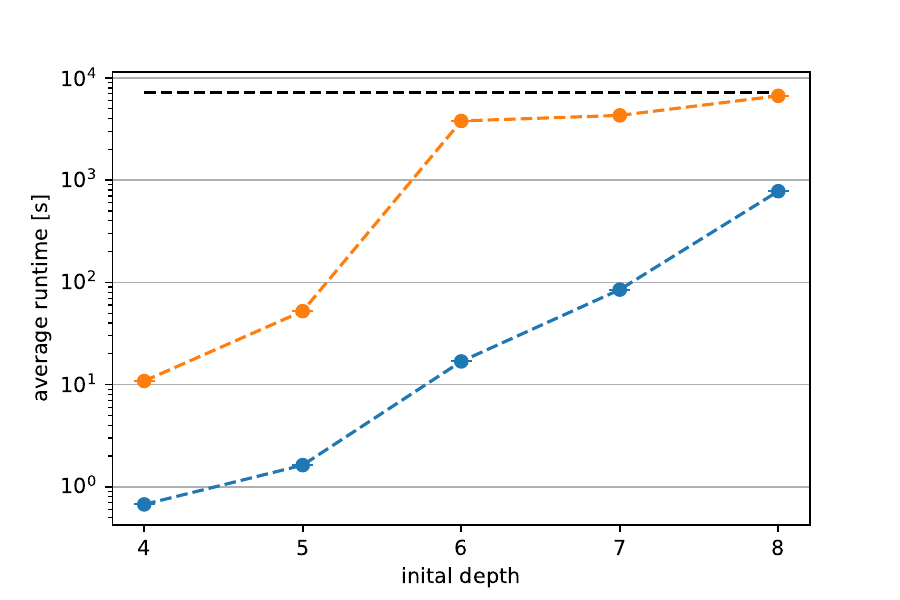}\label{fig:qvring_t}}\\
	\caption{\textcolor{blue}{Comparison of different routing algorithms on the QV benchmark set.
			Instances are QV circuits of equal width and depth (\cite{Cross2019,Moll2018})
			on a 8-vertex line (left plots) and a 8-vertex ring (right plots).
			Data points are averaged over 10~instances.
			Error bars indicate the empirical standard deviation.
			The legend in (c) holds for all plots.
			(a), (b): Gate count relative to BIP.
			(c), (d): Depth relative to BIP.
			(e), (f): Average runtime.
			The dashed line indicates the runtime limit of 7200~s.
			Other heuristics are not shown since their runtimes are in the order of 1~s.}
	}
	\label{fig:resrouting_qv}
\end{figure}
\begin{figure}
	\centering
	\subfloat[Gatecount comparison on Y graph.]{\includegraphics[width=0.49\linewidth]{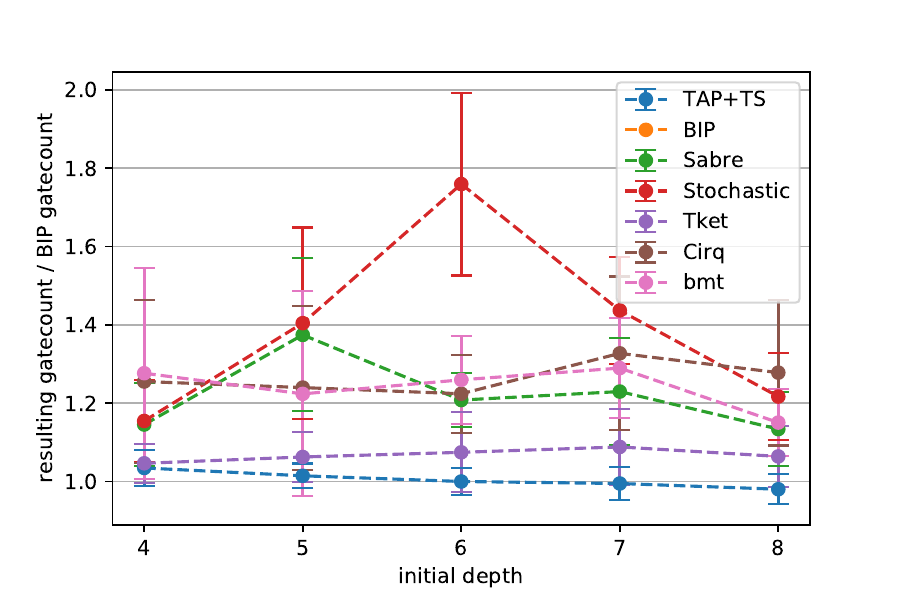}\label{fig:qvy_g}}
	\subfloat[Gatecount comparison on ladder.]{\includegraphics[width=0.49\linewidth]{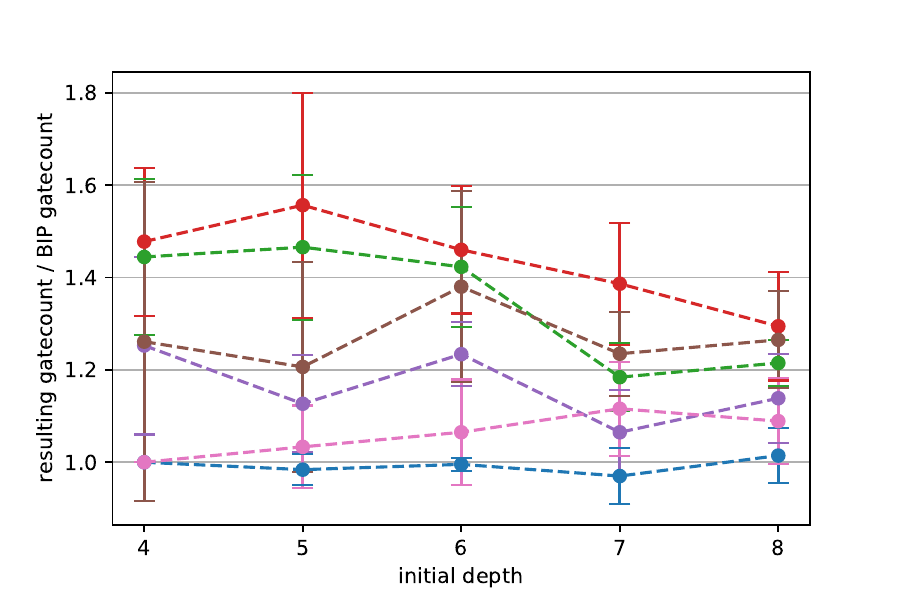}\label{fig:qvl_g}}\\
	\subfloat[Depth comparison on Y graph.]{\includegraphics[width=0.49\linewidth]{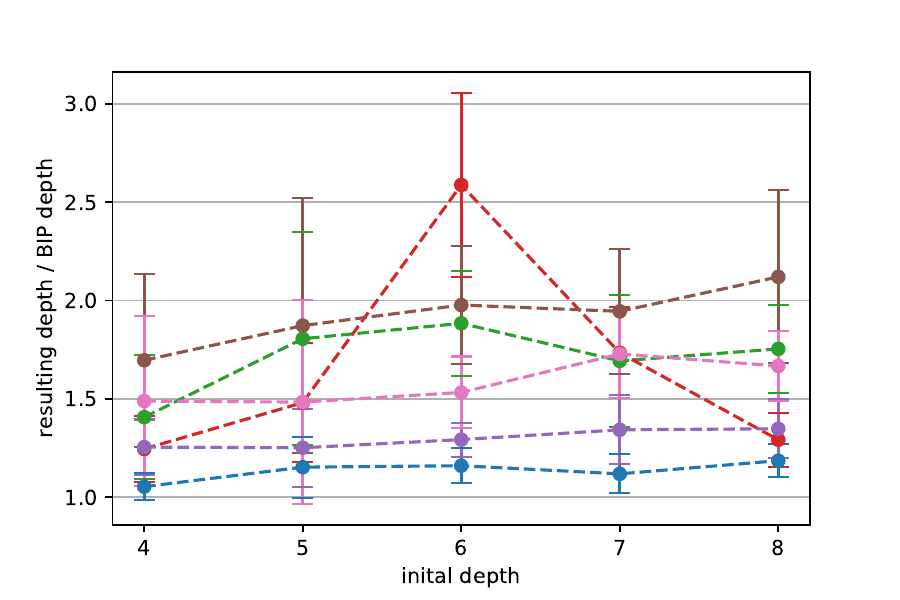}\label{fig:qvy_d}}
	\subfloat[Depth comparison on ladder.]{\includegraphics[width=0.49\linewidth]{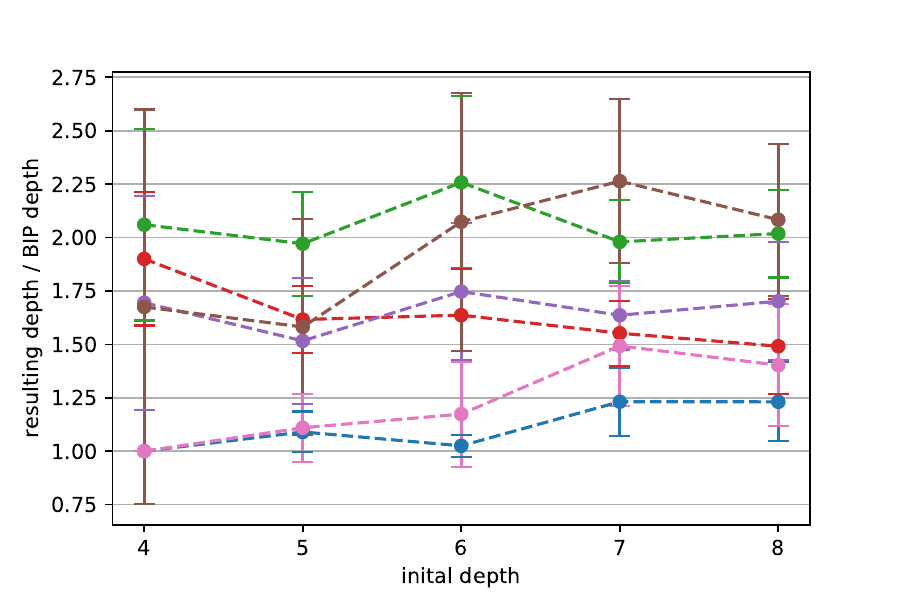}\label{fig:qvl_d}}\\
	\subfloat[Runtime comparison on Y graph.]{\includegraphics[width=0.49\linewidth]{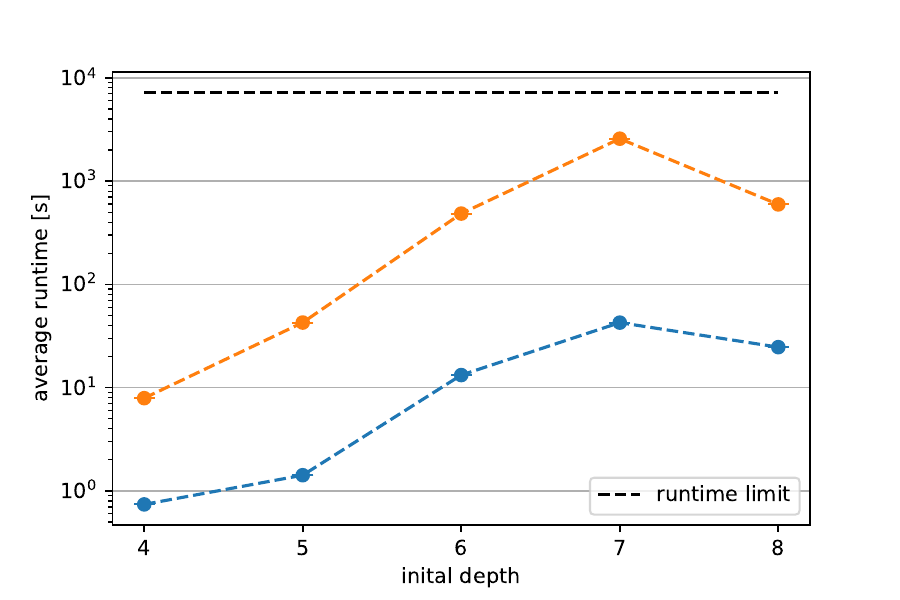}\label{fig:qvy_t}}
	\subfloat[Runtime comparison on ladder.]{\includegraphics[width=0.49\linewidth]{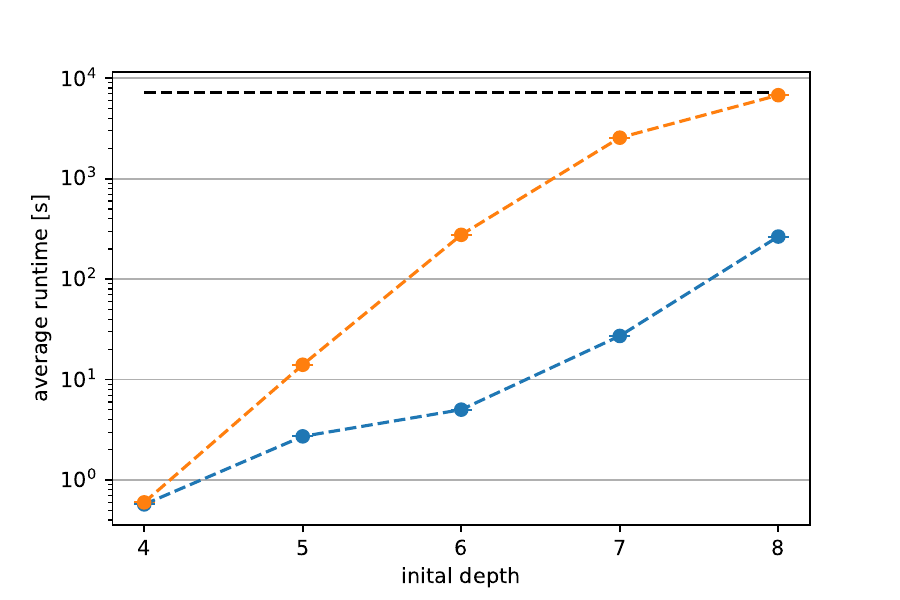}\label{fig:qvl_t}}\\
	\caption{\textcolor{blue}{Comparison of different routing algorithms on the QV benchmark set.
			Instances are QV circuits of equal width and depth (\cite{Cross2019,Moll2018})
			on a 8-vertex Y-graph (left plots) and a 8-vertex ladder-graph (right plots).
			Data points are averaged over 10~instances.
			Error bars indicate the empirical standard deviation.
			The legend in (a) holds for all plots.
			(a), (b): Gate count relative to BIP.
			(c), (d): Depth relative to BIP.
			(e), (f): Average runtime.
			The dashed line indicates the runtime limit of 7200~s.
			Other heuristics are not shown since their runtimes are in the order of 1~s.}
	}
	\label{fig:resrouting_qv2}
\end{figure}
\subsection{Performance on Real Hardware}

Lastly, we demonstrate the influence of routing quality
on the performance of current quantum hardware.
For this purpose, we compile example quantum algorithms
to a real hardware device using different routing methods
and compare the results after execution.

The example algorithms are the Quantum approximate optimization algorithm (QAOA)
with depth parameter $ p = 1 $ applied to MaxCut
on ring graphs with 7, 8 and 9 vertices.
QAOA is a well-known hybrid quantum-classical algorithm.
A detailed description is given by Farhi et al.\ in \cite{farhi2014}.
We choose QAOA for three reasons:
First, it is often considered as a promising candidate
for near-term quantum applications
because of its universality and adaptive complexity.
Second, routing is an essential requirement
when running QAOA on real hardware, since QAOA employs two-qubit gates
along the edges of an underlying problem graph
which usually is not isomorphic to the hardware connectivity graph.
This is also the case in our experiment.
Third, the approximation ratio of QAOA
is an easily accessible, single-metric performance measure.
The approximation ratio is defined as $ r = \mathbbm{E}(c) / c^* $.
Here $ \mathbbm{E}(c) $ denotes the expectation value
of the cut size~$c$ produced by QAOA,
and $ c^* $ is the size of an optimal cut.
QAOA is a parameterized quantum algorithm.
For QAOA with depth $p=1$ applied to MaxCut on ring graphs,
the optimal parameters can be calculated analytically,
\cf \cite{wang2017}.

The routing algorithms tested are the proposed method TAP+TS,
Tket and Stochastic swap.
For circuit construction, compilation and backend communication,
we use IBM's SDK Qiskit (\cite{Qiskit}).
The quantum backend in our experiment is \emph{ibmq\_ehnigen} (\cite{ibmq}), see Fig.~\ref{fig:ehnh}.
We compile each of the QAOA circuits with all three routing methods
while other compiler options remain unchanged.

Table~\ref{tab:QAOAcomp} compares the gate counts and depths
of the compiled circuits as well as the achieved approximation ratios
after execution on \emph{ibmq\_ehnigen}.
The proposed routing procedure TAP+TS performs best
both in terms of swaps added and depth increase on all instances,
followed by Tket.
For the instance with $ N = 8 $,
we note large reductions in gate count of 65\% and 69\%
when compared to Tket and Stochastic swap, respectively.
Similarly, the circuit depth reductions amount to 60\% and 68\%, respectively.
We observe that small gate count and depth correlate
with an increase of approximation ratio on all instances.
For the $ N = 8$ instance,
we observe an increase of 23\% compared to TKet. % and 25\% compared to Stochastic swap.
\textcolor{blue}{
  The influence of depth and gate count on noise
in quantum computers are currently investigated in depth by different
groups. Although the relations are not yet fully understood,
this experiment shows a clear correlation
between routing quality and quantum algorithm performance.
}

Of course, this experiment is by no means a comprehensive benchmark
of our routing method. However, it stresses the need for good compiling methods
in quantum computation on currently available hardware
and shows clear benefits of our approach.
\begin{table}
	\centering
		\begin{tabular}{ccccccccccc}
		\toprule
		$N$ & $\#_{2q}^{\text{TAP+TS}}$ & $\#_{2q}^{\text{Tket}}$& $\#_{2q}^{\text{Sto}}$ & $d^{\text{TAP+TS}}$ & $d^{\text{Tket}}$& $d^{\text{Sto}}$ &$r_{\text{ideal}}$ & $r_{\text{TAP+TS}}$ & $r_{\text{Tket}}$ & $r_{\text{St}}$\\
		\midrule
		7 & 21 & 31 & 54&22 & 28 &41 & 0.875 & 0.77 & 0.61 & 0.56  \\
%		7 & 21 & 31 & 54&22 & 28 &41 & 5.25 & 4.60 & 3.65 & 3.37  \\	
		
		8 & 25 & 71 & 80 & 21 & 53 & 65 & 0.75 & 0.64 & 0.52 & 0.51  \\
%		8 & 25 & 71 & 80 & 21 & 53 & 65 & 6.0 & 5.09& 4.13 & 4.08  \\
		
		9 & 30 & 54 & 63 & 21 & 39 & 45 & 0.84 & 0.65& 0.58 & 0.55  \\
%		9 & 30 & 54 & 63 & 21 & 39 & 45 & 6.75 & 5.22& 4.62 & 4.40  \\

		\bottomrule
	\end{tabular}
\caption{Results for compiling quantum algorithms to real hardware
	with different routing algorithms.
	Compiled quantum circuits are QAOA for MaxCut on ring graphs
	with $N$ vertices.
	The target quantum device is \emph{ibmq\_ehningen}.
	Given are the total two-qubit gate counts $ \#_{2q}^{\text{TAP+TS}} $,
	$ \#^{\text{Tket}}_{2q} $, $ \#_{2q}^{\text{Sto}} $
	for the circuit routed with TAP+TS, Tket and Stochastic swap, respectively,
	as well as the corresponding depths $ d^{\text{TAP+TS}} $, $ d^{\text{Tket}} $,
	$ d^{\text{Sto}} $.
% Depth and gate counts refer to circuits after gate synthesis, i.e. all gates are decomposed into hardware native gates.
	Furthermore, we give the theoretical approximation ratio~$ r_{\text{ideal}} $
	as well as the actually achieved approximation ratios $ r_{\text{TAP+TS}} $,
	$ r_{\text{Tket}} $, $ r_{\text{St}} $.}
	\label{tab:QAOAcomp}
\end{table}

	\section{Conclusion}\label{sec:conc}
In this article, we have studied the problem of qubit routing by swap insertion.
This problem needs to be solved in the context of quantum compilation
but is practically intractable even for moderate instance sizes
when solved by an exact approach.
We have proposed a new decomposition approach
that exploits the capabilities of exact integer programming
while reducing complexity significantly,
at the expense of possibly returning suboptimal solutions.
Improvement in performance is achieved
by solving only a subproblem by integer programming.
The objective of this subproblem, called token allocation problem (TAP),
is a lower bound on the overall routing problem.
We strengthened the model of the TAP by deriving valid inequalities.
The solution of the TAP defines the second subproblem,
which is the well-known problem of token swapping.
Its solution yields the actual objective value of the overall routing problem.
For the token swapping problem,
we have enhanced an existing approximation algorithm
and have developed an exact branch-and-bound algorithm.
Combining the solutions of both subproblems
produces a solution to the overall routing problem.

Finally, we have given experimental results
showing the applicability of the proposed algorithm.
Although it being a heuristic approach,
we observed close-to-optimal solutions outperforming other heuristics.
This improvement comes at the expense of increased solution time;
however, the solution time is still much smaller than for exact approaches.
Moreover, the lower bounds computed in the TAP turn out to be tight,
which we consider as a reason for the good performance
of the proposed decomposition method.

Our experiments on actual quantum hardware show the impact
of high-quality solutions to qubit routing for quantum computation
on current hardware.
For the example algorithm QAOA, we observed an increase in approximation ratio
when compiled with the proposed routing method compared to standard heuristics.

A possible field of future research is the structural analysis of the TAP
in order to further strengthen its integer programming formulation.
Additionally, practical experience with current quantum hardware
suggests more complicated cost functions to be considered:
specific qubits and gates suffer from higher error rates than others.
Even more complicated, executing gates on specific qubit pairs simultaneously
results in high error rates -- a phenomenon known as \emph{crosstalk}.
Incorporating these effects in the routing algorithm
is an additional direction of future research.

	\section*{Acknowledgements}
	This research has been supported
	by the Bavarian Ministry of Economic Affairs, Regional Development
	and Energy with funds from the Hightech Agenda Bayern
	as well as the Center for Analytics -- Data -- Applications
	(ADA-Center) within the framework
	of \qm{BAYERN DIGITAL~II} (20-3410-2-9-8).
	The research is part of the Munich Quantum Valley, which is supported by the Bavarian state government with funds from the Hightech Agenda Bayern Plus.
	\bibliographystyle{abbrvurl}
	\bibliography{bib_compiling}
	
\end{document}